\documentclass[onecolumn]{IEEEtran}

\usepackage[dvips]{graphicx}
\usepackage{latexsym}
\usepackage{epsfig}
\usepackage{color}
\usepackage{amsmath} 
\usepackage{amssymb}
\usepackage{upgreek} 
\usepackage{amsxtra}
\usepackage{enumitem}
\usepackage{float}
\usepackage[T1]{fontenc}
\usepackage{setspace}
\usepackage{balance}
\usepackage{diagbox}
\usepackage{epstopdf}
\usepackage{amsthm}
\usepackage{dsfont} 
\usepackage{multirow}
\usepackage{hyperref}
\usepackage{tabularx,colortbl}
\usepackage[table]{xcolor}
\usepackage{lscape}
\usepackage{url}
\usepackage{cite}
\usepackage{algorithm}
\usepackage{algpseudocode}
\usepackage[ subrefformat=parens,labelformat=parens, caption=false, font=footnotesize]{subfig}
\usepackage{mathtools}
\usepackage{units}
\usepackage{cases}

\newcommand{\Tpow}{{\sf T}}
\newcommand{\Invpow}{{\sf -1}}
\newcommand{\Strpow}{{\sf *}}
\newcommand{\Hpow}{{\sf H}}

\newtheorem{lemma}{\textbf{Lemma}}

\newcommand{\abs}[1]{\left|{#1}\right|}

\newcommand{\nth}[1]{{#1}{\text{th}}}

\begin{document}

\title{Symbol Error Analysis of Linear Receivers in Terahertz Channels under Channel-Noise Dependence
}

\author{
        Almutasem Bellah Enad, \IEEEmembership{Graduate Student Member, IEEE,}
        Jihad Fahs, \IEEEmembership{Senior Member, IEEE,}
        Hadi Sarieddeen, \IEEEmembership{Senior Member, IEEE,}
        Hakim Jemaa, \IEEEmembership{Graduate Student Member, IEEE}
        and Tareq Y. Al-Naffouri, \IEEEmembership{Fellow, IEEE}%
\thanks{ 
Almutasem~Bellah~Enad, Jihad Fahs, and Hadi Sarieddeen, are with the American University of Beirut (AUB), Beirut, Lebanon. (aae118@mail.aub.edu, \{jihad.fahs, hadi.sarieddeen\}@aub.edu.lb).
 Hakim Jemaa and Tareq Y. Al-Naffouri are with King Abdullah University of Science and Technology (KAUST), Kingdom of Saudi Arabia, \{hakim.jemaa, tareq.alnaffouri\}@kaust.edu.sa.
 This work is supported by the AUB's University Research Board and Vertically Integrated Projects Program, and KAUST's Office of Sponsored Research under Award No.~ORFS-CRG12-2024-6478.}
}

\maketitle

\begin{abstract} 
This paper develops a comprehensive framework for the performance analysis of linear detectors, namely zero-forcing (ZF) and minimum mean-square error (MMSE), under diverse terahertz (THz) channel conditions. Three fading models are considered: Rayleigh fading, the $\alpha$--$\mu$ distribution for indoor THz environments, and the mixture-gamma (MG) distribution for outdoor THz scenarios. Semi-analytical, approximate, and asymptotic expressions for the symbol error rate (SER) are derived, explicitly incorporating the correlation between the channel and the additive noise arising from hardware impairments. This correlation is characterized using both statistical approaches and copula-based methods to effectively capture complex dependency structures. The theoretical findings are validated through simulations, demonstrating strong agreement with the derived expressions and confirming the accuracy and robustness of the proposed framework. The results demonstrate the significant impact of channel--noise dependence on THz-band receiver performance and verify the expected performance degradation of biased MMSE receivers in point-to-point links employing higher-order quadrature amplitude modulation. Specifically, at a target SER of $10^{-3}$, a 70\% correlation results in approximately a 6.5~dB degradation in the effective signal-to-noise ratio, with mismatched MMSE detection incurring an additional 1~dB loss compared to ZF. Nonetheless, MMSE offers enhanced numerical stability under severe channel fading conditions, where channel inversion causes noise amplification.

\end{abstract}

\begin{IEEEkeywords}
THz communications, linear detectors, ZF, MMSE, $\alpha$--$\mu$ fading, mixture of gamma, symbol error rate, copula-based correlation.
\end{IEEEkeywords}

\section{Introduction}
\IEEEPARstart{T}{he} terahertz (THz) band, spanning 100 gigahertz (GHz) to 10 THz, is a key enabler for future wireless communications. THz communications promise terabit-per-second (Tbps) data rates and ultra-low latency, enabling high-speed backhaul, extended reality, and wireless sensing applications~\cite{Jornet2024Evolution,Sarieddeen9514889}. However, challenges in THz signal propagation and transceiver design hinder the full potential of this technology~\cite{Sarieddeen9514889}. THz channels are highly sparse in time and space because of severe attenuation, molecular absorption, strong directionality, and blockage sensitivity, though non-line-of-sight (NLoS) communication remains feasible, particularly indoors with high-gain antennas~\cite{9591285}. The need for low-complexity, Tbps-achieving baseband operations further constrains THz-band signal processing design~\cite{Sarieddeen2024Bridging}. Robust and realistic THz performance analysis frameworks are therefore critical, yet many existing models overlook practical effects such as absorption, misalignment, and the noise-coloring induced by linear baseband filtering~\cite{Sarieddeen2024Bridging,8610080,9714471,electronics12061336,10663782}.

Accurate channel and noise modeling is essential for system optimization and forms the foundation of the proposed analysis, particularly through the use of the $\alpha$--$\mu$ and mixture-gamma (MG) distributions for indoor and outdoor THz fading, respectively~\cite{8610080,papasotiriou2021experimentally}. The $\alpha$-$\mu$ distribution is widely recognized for its ability to capture indoor THz small-scale fading due to its flexible parameterization~\cite{papasotiriou2021experimentally}, and has been extensively applied in the development of analytical frameworks for indoor THz channels~\cite{papasotiriou2021experimentally,11072420,11362947}. Although recent studies have proposed more comprehensive models, further validation under realistic operating conditions remains necessary. For example, \cite{8610080} incorporates misalignment fading through the zero-boresight model, and \cite{9714471} extends this framework by integrating random fog effects modeled with a gamma distribution. In outdoor scenarios, \cite{bhardwaj2022performance} develops an analytical framework with closed-form expressions that leverage the generalized Gaussian mixture (GM) distribution for small-scale fading, \cite{electronics12061336} investigates atmospheric channel effects using a gamma-gamma distribution tailored for outdoor environments, and \cite{10663782} introduces an analytical framework for single-input single-output (SISO) outdoor THz channels, incorporating MG distribution and misalignment fading.

As conventional assumptions of channel–noise independence can break down in emerging technologies such as THz-band systems and reconfigurable intelligent surfaces (RISs)~\cite{Sarieddeen9514889,10663782,8610080}, accounting for channel–noise correlation becomes essential for accurate performance analysis. In practical THz systems, such dependence may arise from channel-induced molecular absorption noise (caused by retransmissions following molecular absorption at high frequencies)~\cite{Sarieddeen9514889} or hardware imperfections~\cite{8610080}, for example. Since traditional independent-noise models cannot capture these effects~\cite{10663782,Magableh2009,9714471}, incorporating channel–noise correlation provides a more general and physically consistent framework. Recent studies by the authors~\cite{11072420,11078009} further highlight this need by analyzing the system performance under correlated channels and noise, underscoring the importance of addressing this gap in next-generation network design. Copula theory has been widely applied in wireless communications to model complex dependencies. For example, copulas have been used to model Nakagami-$m$ fading with tail dependence~\cite{4487493} and to capture non-standard dependencies between frequency bands in fading channels~\cite{7032317}. They have also been employed to estimate channel parameters and determine the capacity of correlated multiple-input multiple-output (MIMO) systems~\cite{Gholizadeh2015OnTC} and to study the impact of correlated fading on multiple-access channels~\cite{9762969}. More recently, copulas have been used for THz channel modeling~\cite{11072420} and further applied to analyze RIS-aided communications with phase noise, where the Farlie-Gumbel-Morgenstern (FGM) copula has mainly been used to model correlation over Rayleigh fading links~\cite{9690184,10453225,enad2025theoretical}. Despite extensive work on THz performance analysis, the impact of channel--noise dependence on linear receiver performance remains largely unexplored, particularly across diverse THz fading models and practical detectors.

At the link level, linear detectors such as zero-forcing (ZF) and minimum mean-square error (MMSE) are near-optimal in SISO links and are also preferred in MIMO scenarios for their computational efficiency, which is crucial when Tbps data rates are sought~\cite{Sarieddeen9514889}. Most symbol error rate (SER) analyses of linear detection schemes assume Gaussian noise after linear filtering, such as post-multiplication by the channel pseudo-inverse in ZF detection. However, deriving the probability density function (PDF) of the post-detection effective noise is essential for accurate SER analysis in fast-fading channels, especially when the noise is channel-dependent. This characterization is also important when detection is followed by noise-aware channel decoding, where post-detection effective noise can provide pseudo-soft information for decoding~\cite{Sarieddeen2022GRAND}.

In point-to-point SISO links, the ZF detector is considered optimal in high-signal-to-noise ratio (SNR) regimes due to its ability to completely remove the channel effect; however, its performance degrades under poor channel conditions~\cite{Sarieddeen9514889}. Specifically, when channel magnitude approaches zero, the ZF detector amplifies the noise power unboundedly, potentially causing severe error propagation and even quantizer overflow in practical receivers~\cite{Padgett2009}. In contrast, the MMSE detector regularizes the channel inversion through a scaling factor, limiting noise enhancement and stabilizing the received signal, which makes MMSE more robust in deep fading scenarios~\cite{4544952,proakis2008digital}. However, such regularization introduces a bias that alters the recovered symbols, explaining the inferior performance of MMSE compared to ZF. When both detectors operate under the same decision regions, the performance of the MMSE detector is no longer equivalent to that of ZF  at high SNR, since the inherent bias shifts the decision boundaries relative to ZF. This bias motivates a dedicated SER analysis of MMSE detection under ZF-based decision regions (a common practice) to accurately capture its performance characteristics.

\subsection{Contributions}
This paper presents a comprehensive framework for analyzing the SER of linear data detection in SISO THz-band systems under various channel conditions. The system assumes fast-fading channels with additive Gaussian noise, where the channel and noise terms may exhibit statistical dependence, addressing practical cases where such dependence arises. The framework is validated through the SER analysis of ZF and MMSE detectors employing quadrature amplitude modulation (QAM), demonstrating high accuracy across different THz channel models. Two correlated modeling approaches are adopted: (i) direct statistical correlation and (ii) copula-based methods~\cite{nelsen2006introduction}. In the independent case, the framework reduces to conventional models~\cite{4544952}, while providing a generalized solution for dependent scenarios. Extensive numerical results demonstrate that the proposed framework accurately characterizes the SER across a wide range of THz operating conditions, including cases with hardware impairments~\cite{8610080}. Moreover, the framework's flexible structure allows straightforward extensions to other channel and noise models, making it a valuable tool for the design and optimization of THz communication systems.
The main technical contributions are:
\begin{itemize}
\item Building upon our prior work on ZF detection under Rayleigh fading~\cite{11078009} and indoor THz channels~\cite{11072420}, we extend the analysis to outdoor THz fading environments. We derive the PDF of the ratio between a Gaussian random variable and an MG-distributed random variable, and leverage this result to derive SER expressions under correlated conditions. Channel--noise dependence is modeled using a copula-based framework. Furthermore, we consider a realistic correlated THz system model and derive the corresponding effective-noise PDF.

\item We investigate the performance of MMSE detection over Rayleigh, indoor THz, and outdoor THz fading channels by deriving the PDF of the post-filtered received signal under general channel--noise correlation models. MMSE introduces an inherent estimation bias when conventional ZF-based decision regions are employed, as is commonly assumed in practice~\cite{tse2005fundamentals}. This observation motivates a dedicated SER analysis of SISO MMSE detection.

\item We derive SER expressions in multiple forms, including novel semi-analytical formulations, closed-form solutions, asymptotic characterizations, and accurate approximations. We demonstrate analytical tractability and numerical accuracy across a broad range of THz channels.

\end{itemize}

\subsection{Organization and notation}
The remainder of the paper is organized as follows. Section~\ref{sec:system_model} introduces the system and channel models. The SER analysis is presented in Section~\ref{sec:problem_formulation}. We provide the error probability expressions for the Rayleigh channel in Section~\ref{sec:SER_gaussian} and for the indoor and outdoor THz channels in Section~\ref{sec:SER_THz}. Simulation results are discussed in Section~\ref{sec:simultaion}, and concluding remarks are given in Section~\ref{sec:conclusion}. 
We employ the following mathematical notation: the $\abs{\cdot}$ operator represents the absolute value, $\mathbb{E}[\cdot]$ denotes the expectation operator, and $\mathbb{P}(\cdot)$ signifies the probability operator. The superscripts ${(\cdot)}^{\Tpow}$, ${(\cdot)}^\Strpow$, ${(\cdot)}^\Hpow$, and ${(\cdot)}^\Invpow$ represent the transpose, conjugate, Hermitian, and inverse operators, respectively. The circularly symmetric complex Gaussian distribution with independent and identically distributed real and imaginary components, each with variance $\sigma^2$, is denoted by $\mathcal{CN}(0,2 \sigma^2)$. The gamma function is denoted by $\Gamma(.)$, and the incomplete gamma function is defined as $\gamma(\zeta, u)=\int_0^u t^{\zeta-1} e^{-t} d t$. The $Q$-function, given by $Q(x)=\frac{1}{\sqrt{2 \pi}} \int_x^{\infty} e^{ -\frac{u^2}{2}} d u$, and the error function, given by $\operatorname{erf}(x)=\frac{2}{\sqrt{\pi}} \int_0^{x} e^ {-u^2} d u$, are used extensively in performance analysis~\cite{prudnikov1986integrals}. We further use the Meijer G-function, represented as $G_{p, q}^{m, n}\left[z \left\lvert \begin{smallmatrix}a_1, \ldots, a_p \\ b_1, \ldots, b_q\end{smallmatrix} \right.\right]$~\cite[eq. (9.301)]{prudnikov1986integrals,kilbas2004h}.

\section{System Model}\label{sec:system_model}
We conduct a comparative analysis over multiple system models: a reference Rayleigh channel model, a generic indoor/outdoor THz channel model that captures fast-fading effects, a specialized THz model that accounts for unique propagation losses and distortion noise caused by hardware impairments, and a generic model of noise–channel correlation using copulas. By evaluating the impact of THz-specific factors on system performance, this study provides valuable insights into the robustness of wireless communications under different THz-band fading conditions.

\subsection{Reference Rayleigh Channel}
We consider a fast-fading, narrow-band SISO communication system with a frequency-domain input-output relation. To account for the correlation between the channel and the noise, we adopt the system model:
\begin{align}
r &= hs + n \notag\\
&= hs + \sqrt{2}\,\lambda^* \sigma_n h + \sqrt{2(1 - |\lambda|^2)}\sigma_n \tilde{n}\notag \\ &= h(s + \sqrt{2}\lambda^* \sigma_n) + \sqrt{2(1 - |\lambda|^2)}\sigma_n \tilde{n},\label{generic_sys}
\end{align}
where $r$ denotes the received symbol, $s$ is the transmitted QAM symbol, and $h$ and $n$ are circularly symmetric Gaussian with zero mean and variances $\mathbb{E}[hh^*] = 1$ and $\mathbb{E}[nn^*] = 2\sigma_n^2$, representing the channel and the additive noise, respectively. \(\lambda = \lambda_x + j\lambda_y\) represents the complex correlation coefficient between \(h\) and \(n\). The noise \(\tilde{n} \sim \mathcal{CN}(0,1)\) ($n\!=\!\sqrt{2}\sigma_n\, \tilde{n}$) is independent of \(h\). The PDF and CDF of $|h|$ and $|\tilde{n}|$ are 
\begin{equation}
    f_{X}(x) \!=\! \frac{x}{\sigma_X^2} \exp\left( -\frac{x^2}{2\sigma_X^2} \right)\ \text{and} \ 
    F_{X}(x) \!=\!1 - \exp\left(\!-\dfrac{x^2}{2\sigma_X^2}\right)\!, \label{PDF_n_h_ray}
\end{equation}
respectively, where $2\sigma_X^2 \!=\!1$. The distributions of $|h|^2$ and $|\tilde{n}|^2$ follow an exponential distribution (Chi-squared with two degrees of freedom),
\begin{equation}
    f_{X^2}(x) = \frac{1}{\sigma_X^2} e^{-\tfrac{x}{\sigma_X^2}}, \quad x \geq 0. \label{PDF_|h|2_GS}
\end{equation}

\subsection{Generic THz System Model}

We consider a system of baseband input-output relation,
\begin{equation}
    r=\sqrt{p_tG_tG_r} h_p h_f s + n = \nu h_f s + n = h s + n, \label{generic_Thz_sys}
\end{equation}
where $p_t$ is the average transmit power, $G_t$ and $G_r$ are the transmit and receive antenna gains, respectively, and where we have $h\!=\!\nu h_f$ and $\nu \triangleq \sqrt{p_tG_tG_r} h_p$. The additive noise, $n \sim \mathcal{CN}(0,2 \sigma_n^2)$, has a Rayleigh-distributed amplitude with a PDF and CDF given in~\eqref{PDF_n_h_ray}.

For the channel, $h_p$ represents the THz-band free-space path loss, consisting of both spreading and molecular absorption losses, expressed as~\cite{9591285}
\begin{equation}
    h_p = \left(c/(4 \pi f d)\right)^{\frac{\varrho}{2}} e^{-\frac{1}{2} K_{\text{abs}} d}, \notag
\end{equation}
where $c$ is the speed of light, $f$ is the operating frequency, $d$ is the communication distance, and $K_{\text{abs}}$ is the molecular absorption coefficient (more details in \cite{9591285}). In measurement-based sub-THz/THz works \cite{9591285}, \cite{10663782}, the path loss exponent, $\varrho$, is best-fit to 2. Furthermore, in the case of the indoor THz model, $h_f$ represents complex small-scale fading of $\alpha$-$\mu$-distributed magnitude \cite{Magableh2009}. Thus, the PDF and CDF of $|h|$ can be expressed as, 
\begin{equation}
    f_{|h|}(y) = \frac{1}{|\nu|}f_{|h_f|}\left(\frac{y}{\nu}\right) = \frac{\alpha \mu^\mu y^{\alpha \mu - 1}}{(\hat{Z}\nu)^{\alpha \mu} \Gamma(\mu)} \hspace{-0.5pt} \exp\left(\!-\mu y^\alpha/(\nu\hat{Z})^{\alpha}\!\right), \label{PDF_|h|_AM}
\end{equation}
\begin{equation}
    F_{|h|}(y)=\frac{\gamma{(\mu,\mu y^\alpha / (\hat{Z}\nu)^\alpha)}}{\Gamma(\mu)}, \ \text{for} \ \, y \geq 0, \label{CDF_|h|_AM}
\end{equation}
where $\alpha\!>\! 0$ is a fading parameter, $\mu$ is the normalized variance of the fading channel, and $\hat{Z} = \sqrt[\alpha]{\mathbb{E}(|h_f|^\alpha)}$ is the $\alpha$ root mean value of the fading channel. Using equation~\eqref{PDF_|h|_AM}, the PDF of \( |h|^2 \) is given by 
\begin{equation}
    f_{|h|^2}(y) = \frac{\alpha \mu^\mu}{2 (\hat{Z}\nu)^{\alpha \mu} \Gamma(\mu)} y^{\frac{\alpha \mu }{2}-1} \exp\left(\!-\mu y^{\alpha/2} / (\nu \hat{Z})^\alpha \!\right).\label{PDF_|h|2_AM}
\end{equation}
For the outdoor THz model, the MG distribution is used to describe the magnitude of small-scale fading ~\cite{papasotiriou2023outdoor}, where the PDF and CDF are expressed as~\cite{papasotiriou2023outdoor}
\begin{equation}
    f_{|h|}(y)=\sum_{i=1}^K \! \frac{w_i}{\nu^{\beta_i}} \frac{\zeta_i^{\beta_i} y^{\beta_i-1} e^{-\frac{\zeta_i}{\nu} y}}{\Gamma\left(\beta_i\right)}= \sum_{i=1}^{K} \frac{\alpha_i}{\nu^{\beta_i}} y^{\beta_i - 1} e^{-\frac{\zeta_i}{\nu} y}, \label{PDF_|h|_MG}
\end{equation}
\begin{equation}
    F_{|h|}(y) = \sum_{i=1}^{K} \alpha_i \zeta_i^{-\beta_i} \gamma\left(\beta_i,\frac{\zeta_i}{\nu} y\right),\label{CDF_|h|_MG}
\end{equation}
where $K$ is the number of gamma components, and $\zeta _i$, $\beta _i$ and $w_i$ denote the scale, shape, and weight of the $\nth{i}$ component; $\sum_{i=1}^K w_i \!=\! 1$. We define $\alpha_i\!=\!w_i\zeta_i^{\beta_i}/\Gamma\left(\beta_i\right)$ for ease of notation. 
Similarly, the PDF of \( |h|^2 \) can be expressed as
\begin{equation}
f_{|h|^2}(y) = \sum_{i=1}^{K} \frac{\alpha_i}{2 \nu^{\beta_i}} y^{\frac{\beta_i}{2}-1} e^{-\frac{\zeta_i}{\nu} \sqrt{y}}.\label{PDF_|h|2_MG}
\end{equation}

\subsection{Copula Correlation Model}\label{copula}
 When the joint distribution of the channel and noise cannot be accurately described by classical measures (e.g., the Pearson correlation between complex Gaussian noise and $\alpha$-$\mu$ fading), we adopt a copula-based statistical model~\cite{nelsen2006introduction}, as applied in THz systems~\cite{10453225,11072420}. The joint PDF of $|n|$ and $|h|$ is expressed as  
\begin{equation}
 f_{|n|,|h|}(x,y) = C(F_{|n|}(x), F_{|h|}(y)) f_{|n|}(x) f_{|h|}(y), \label{copula_PDF}
\end{equation}
where $f_{|n|}(\cdot)$, $f_{|h|}(\cdot)$ and $F_{|n|}(\cdot)$, $F_{|h|}(\cdot)$ represent the marginal PDFs and CDFs, and $C(\cdot,\cdot)$ is the copula density.  
We consider two copula families: the Farlie-Gumbel-Morgenstern (FGM) which models weak dependence,  
\begin{equation}
    C(u,v)=1+\lambda(2u-1)(2v-1), \quad \lambda\in[-1,1], \label{FGM}
\end{equation}
and the Frank copula model for stronger dependencies,  
\begin{equation}
    C(u,v)=\frac{-\lambda(e^{-\lambda}-1)e^{-\lambda(u+v)}}{\left[(e^{-\lambda u}-1)(e^{-\lambda v}-1)+(e^{-\lambda}-1)\right]^2}, \quad \lambda\neq0,\label{FRANK}
\end{equation}
where $\lambda$ controls the dependence strength, with $\lambda>0$ indicating positive dependence. This framework is used to characterize the channel-noise correlation in system model~\eqref{generic_Thz_sys}.

\subsection{Specific THz System Model}
We consider a model that incorporates THz distortion noise from hardware impairments, capturing the correlation between channel fading and receiver noise in \cite{8610080,11072420},
\begin{align}
    r &= \sqrt{p_tG_tG_r} h_p h_f (s + n_t) + n_r + n,\notag\\&= h(s + n_t) + n_r + n,\notag\\& 
    =hs+n_{tr}+n.
    \label{Sp_Thz_sys}
\end{align}
Given $h$, the distortion noise components due to transmitter and receiver impairments are modeled as \( n_t \sim \mathcal{CN}(0, \kappa_t^2) \) and \( n_r \sim \mathcal{CN}(0, \kappa_r^2 |h|^2) \), respectively. The additive noise term \( n \sim \mathcal{CN}(0, 2\sigma_n^2) \) is independent of \( h \). Additionally, the combined distortion noise is given by \( n_{tr} \sim \mathcal{CN}(0, 2\sigma_{tr}^2) \), where \(2\sigma_{tr}^2 = \kappa_r^2 |h|^2 + \kappa_t^2 |h|^2 = 2\bar{\sigma}_{tr} |h|^2,\) with \( 2\bar{\sigma}_{tr} = \kappa_r^2 + \kappa_t^2 \). 
The parameters, $\kappa_t$ and $\kappa_r$, are non-negative design factors representing hardware imperfections. Here, the PDF of $\abs{h}$ is defined in~\eqref{PDF_|h|_AM} and \eqref{PDF_|h|_MG} for indoor and outdoor channels, respectively.

\section{Error Analysis for Linear Detectors}\label{sec:problem_formulation}


We highlight the analytical foundation of the SER evaluation under THz fading channels in four parts: (i) SER analysis for ZF, (ii) SER analysis for MMSE, (iii) asymptotic SER characterization, and (iv) post-filtering detection. 

\subsection{SER for ZF Detector}
At the receiver, a ZF detector inverts the channel effect as
\begin{equation}
    \tilde{r} = h^{-1}r = s+\frac{n}{h} = s + z,\label{effective_zf}
\end{equation}
where effective noise, $z\!=\!n/h$, is the ratio of two dependent variables due to channel-induced molecular absorption noise\cite{Sarieddeen9514889}. In~\cite{11072420,11078009}, we developed a generic framework for calculating the SER that does not assume channel-independent noise, by leveraging the effective-noise ratio distribution. The average SER is expressed as
\begin{align}
    &\text{SER}= 1 - \frac{1}{M}\sum_{k=1}^{M} P_{c|s_k}, \notag 
    \\&P_{c|s_k} =  \iint_{Dk} f_{z_x,z_y}(\tilde{r}_x-s_x, \tilde{r}_y - s_y)\, d\tilde{r}_x\, d\tilde{r}_y,\label{SER-ZF}
\end{align}
where $P_{c|s_k}$ is the probability of correct detection of transmitted symbol \(s_k = (s_x,s_y)\), $\tilde{r}_x$ and $\tilde{r}_y$ denote the real and imaginary parts of $\tilde{r}$, $D_k$ represents the decision region of symbol $s_k$, and $f_{z_x,z_y}(\cdot,\cdot)$ is the PDF of the effective noise. 

\subsection{SER for MMSE Detector}\label{MMSE_SER_fram}
For a linear observation model, $r = h s + n$, the MMSE detector is given by $\mathbb{E}[s \!\mid\! r] = \frac{\text{Cov}(s,r)}{\text{Var}(r)} \, r$~\cite{proakis2008digital,4544952}. 
Assuming that $s$ and $n$ are zero-mean and independent, we get
\begin{eqnarray*}
\text{Cov}(s,r) &=& \mathbb{E}[s r^*] - \mathbb{E}[s]\mathbb{E}[r^*] =  h^* \mathbb{E}[|s|^2] = h^* \sigma_s^2\\
\text{Var}(r) &=& \mathbb{E}[|r|^2] = |h|^2 \sigma_s^2 + 2 \sigma_n^2.
\end{eqnarray*}
Letting \(\tilde{\sigma}\triangleq\frac{\sqrt{2}\sigma_n}{\sigma_s}\), then the received signal post filtering is
\begin{equation}
    \tilde{r}  
    = \frac{h^* }{|h|^2 + \tilde{\sigma}^2} r = \frac{|h|^2}{|h|^2 + \tilde{\sigma}^2} s + \frac{h^*}{|h|^2 + \tilde{\sigma}^2} n,\label{MMSE_filter}
\end{equation}

For analysis, we first derive the PDF of \(\tilde{r}\) and then evaluate the SER within the decision region of the symbol \(s = (s_x,s_y)\), which is readily a minimum distance region. 
Given $|h|$, \(\tilde{r} = (\tilde{r}_x,\tilde{r}_y)\) will be complex Gaussian,
\begin{equation*}
   \tilde{r}_{|h} \sim \mathcal{CN}\left(\frac{|h|^2}{|h|^2 + \tilde{\sigma}^2} s, \frac{|h|^2 }{(|h|^2 + \tilde{\sigma}^2)^2}2\sigma_n^2\right),
\end{equation*}
and the PDF of \(\tilde{r}\) is given by
\begin{equation}
    f_{\tilde{r}}(\tilde{r})=\int_0^\infty p_{\tilde{r}_{|h}}(t,\tilde{r})  f_{|h|^2}(t) \,dt.
    \label{eq:tilder}
\end{equation}
Equation~\eqref{eq:tilder} boils down to:
\begin{align}
 f_{\tilde{r}}&(\tilde{r}_x,\tilde{r}_y)= \frac{1}{\pi } \int_0^\infty \frac{(t + \tilde{\sigma}^2)^2}{2t\sigma_n^2} \exp\left(- \frac{(t + \tilde{\sigma}^2)^2}{2t \sigma_n^2} \left[\left(\tilde{r}_x - \frac{t}{t + \tilde{\sigma}^2} s_x\right)^2 + \left(\tilde{r}_y - \frac{t}{t + \tilde{\sigma}^2} s_y\right)^2\right]\right)  f_{|h|^2}(t) dt.\label{PDF_MMSE}
\end{align}
The probability of correct detection, $P_{c|s_k}$, $1 \leq k \leq M$, is computed as
\begin{equation}
    P_{c|s_k}=\iint_{D_k} f_{\tilde{r}}(\tilde{r}_x,\tilde{r}_y)\, d\tilde{r}_x\, d\tilde{r}_y, \label{eq:error_PDF}
\end{equation}
Substituting~\eqref{PDF_MMSE} in~\eqref{eq:error_PDF},
\begin{align*}
     P_{c|s_k}&=\frac{1}{\pi}\int_0^\infty\!\!\!\! A     \underbrace{\left(\int_{b^{(1)}_k}^{b^{(2)}}\exp\left(-A\left(\tilde{r}_y\!-\!Bs_y\right)^2\right)d\tilde{r}_y\right)}_{I_1 \, \text{integral}}   \underbrace{\left(\int_{b_k^{(3)}}^{b_k^{(4)}}\exp\left(-A\left(\tilde{r}_x\!-\!Bs_x\right)^2\right)d\tilde{r}_x\right)}_{I_2 \, \text{integral}} f_{|h|^2}(t)dt.
\end{align*}
where \( A = A(t) \triangleq \frac{(t+\tilde{\sigma}^2)^2}{2t\sigma_n^2} \), \( B = B(t) \triangleq \frac{t}{t+\tilde{\sigma}^2}\), and for an M-QAM constellation, the symbol $s_k$ has the corresponding decision region \( D_k = [b_k^{(1)},b_k^{(2)}] \times [b_k^{(3)},b_k^{(4)}] \). Since the integrals $I_1$ and $I_2$ are similar in form, we solve only \( I_1 \). Letting $u = \sqrt{A}(\tilde{r}_y - B s_y)$ transforms the integral into
\begin{eqnarray*}
I_1 \!\!\!\!\!&=&\!\!\!\!\! \frac{1}{\sqrt{A}} \int_{\sqrt{A}(b^{(1)}_k - B s_y)}^{\sqrt{A}(b^{(2)}_k - B s_y)} e^{-u^2} \, du=\frac{\sqrt{\pi}}{2\sqrt{A}} \left[\operatorname{erf}\!\left(\sqrt{A}(b^{(2)}_k \!- B s_y)\right) \!- \operatorname{erf}\!\left(\sqrt{A}(b^{(1)}_k \!- B s_y)\right) \right],
\end{eqnarray*}
where we used the definition of the $\operatorname{erf}(\cdot)$-function\cite{prudnikov1986integrals}. 
Finally, 
\begin{align}
    &P_{c|s_k}=\frac{1}{4} \int_0^\infty \left[ \operatorname{erf}\left(\sqrt{A}(b^{(2)}_k - Bs_y)\right) - \operatorname{erf}\left(\sqrt{A}(b^{(1)}_k - Bs_y)\right) \right]\left[ \operatorname{erf}\!\left(\!\sqrt{A}(b^{(4)}_k - Bs_x)\right) \hspace{-2pt} - \hspace{-1.5pt} \operatorname{erf}\!\left(\!\sqrt{A}(b^{(3)}_k - Bs_x)\right) \right] \hspace{-2pt} f_{|h|^2}(t)dt.\label{P_C_M_QAM}
\end{align}
As an example, for 4-QAM, equation~\eqref{P_C_M_QAM} simplifies to
\begin{equation}
P_{c|s_1} = \int_0^\infty\!\! \left(\! 1 \!-\! Q\left( \sqrt{\frac{s_x^2 t}{\sigma_n^2}}\right)\! \right)\!\left( \!1\! -\! Q\left(\! \sqrt{\frac{s_y^2 t}{\sigma_n^2}}\right)\!\right) f_{|h|^2}(t)\,dt, \label{MMSE_SER}
\end{equation}
which implies,
\begin{align}
P_{e|s_1} &=1-\mathbb{E}_{|h|}\left[ \left(1 - Q\!\left(\Upsilon_x\right)\right) \left(1 - Q\!\left(\Upsilon_y\right)\right)\right]=\underbrace{\mathbb{E}_{|h|}\left[ Q\!\left(\Upsilon_x\right)\right]+\mathbb{E}_{|h|}\left[ Q\!\left(\Upsilon_y\right)\right]}_{{\mathcal{K}}}+\underbrace{\mathbb{E}_{|h|}\left[ Q\!\left(\Upsilon_x\right)Q\!\left(\Upsilon_y\right)\right]}_{\mathcal{S}},\label{MMSEApsyx}
\end{align}
where \(\Upsilon_x \triangleq \frac{ s_x\,|h|}{\sigma_n}\) and \(\Upsilon_y \triangleq \frac{ s_y\,|h|}{\sigma_n}\).


\subsection{Asymptotic Analysis}
\label{sec:asympanalysis}
In order to gain further insights when the SER expressions are written in integral forms and cannot be simplified, we perform an asymptotic analysis at high SNR. This is standard in wireless communications performance analysis since it leads to simplifying the expressions and to capturing the functional relationship between SER and SNR in the high SNR regime. In this context, the approach presented in~\cite{tse2005fundamentals} is adopted. At high SNR $\Upsilon_x$ and $\Upsilon_y$ (that is, for small \(\sigma_n\)), the dominant source of errors arises when the overall channel gain becomes small, which is termed a \textit{deep-fade event}. Starting from \eqref{MMSEApsyx}, the term $\mathcal{S}$ can be neglected, since it involves the factor $ Q\!\left(\Upsilon_x\right)\times Q\!\left(\Upsilon_y\right)$, which remains sufficiently small in the regime considered. Consequently, only the term $\mathcal{K}$ remains, and by applying the approach outlined in~\cite{tse2005fundamentals}, we obtain:
\begin{align*}
P_{e|s_1} &\approx \mathbb{P}(\Upsilon_x < 1)+\mathbb{P}(\Upsilon_y < 1)=\mathbb{P}\!\left(\!|h| \!\!<\!\! \frac{\sigma_n}{s_x}\!\right)\!+\! \mathbb{P}\!\left(\!|h|\!\! < \!\!\frac{\sigma_n}{s_y}\!\right)\!=\! F_{|h|}\!\left(\frac{\sigma_n}{s_x}\right)\!+\!F_{|h|}\!\left(\frac{\sigma_n}{s_y}\right)\!.
\end{align*}
For example, for standard $4$-QAM constellations, we have \(s_x = s_y\), and 
\begin{equation}
P_{e|s_1} \approx 2F_{|h|}\!\left(\frac{\sigma_n}{s_x}\right) \label{eq:deepfade}.
\end{equation}

\subsection{Post-Filtering Detection}\label{sec:Post_filtering}
After linear equalization, symbol detection is performed using a maximum likelihood (ML) detector
\begin{equation}
\frac{P(r|H_m)}{P(r|H_n)} \gtrless_{H_n}^{H_m} 1, \label{ML_rule}
\end{equation}
where $H_m$ and $H_n$ correspond to the transmission of symbols $s_m = (s_{x_m}, s_{y_m})$ and $s_n = (s_{x_n}, s_{y_n})$, respectively. Although determining the optimal likelihood-ratio thresholds can be challenging in non-Gaussian noise environments~\cite{Kassam1988}, the post-filtered received signal is Gaussian under perfect channel state information (CSI).  
For a given channel gain $h$, the effective noise variances are  
\(
\tilde{\sigma}^2_{\mathrm{ZF}} = \frac{\sigma_n^2}{a}, 
\quad 
\tilde{\sigma}^2_{\mathrm{MMSE}} = \frac{a}{b^2}\sigma_n^2,
\) 
with $a = |h|^2$ and $b = |h|^2 + \sigma_n^2$, as implied by equations~\eqref{effective_zf} and~\eqref{MMSE_filter}. The post-filtered symbols are  
\(
s_{\mathrm{ZF}} = (s_x, s_y), 
\quad 
s_{\mathrm{MMSE}} = \left( \frac{a}{b}s_x, \frac{a}{b}s_y \right).
\)
The conditional distributions $P(r|H_m)$ and $P(r|H_n)$ are bivariate Gaussian with means $s_m$ and $s_n$, and covariance matrices  
\(
K_{\mathrm{ZF}} = \tilde{\sigma}^2_{\mathrm{ZF}}\mathbb{I}_2, 
\quad 
K_{\mathrm{MMSE}} = \tilde{\sigma}^2_{\mathrm{MMSE}}\mathbb{I}_2,
\) 
where $\mathbb{I}_2$ is the $2\times2$ identity matrix.  
Assuming an $M$-QAM constellation with equally likely symbols ($P(H_m)=1/M$), the ML decision boundaries between any two symbols $(s_m, s_n)$, $m\neq n$, follow from~\eqref{ML_rule}. For the ZF detector, this simplifies to
\[
2(s_{x_m} \!\!-\! s_{x_n})r_x \!+\! 2(s_{y_m} \!- \!s_{y_n})r_y\! + \!\underbrace{(s_{x_n}^2 \!\!\!+ \!s_{y_n}^2\!\! -\!\! s_{x_m}^2 \!\!\!- s_{y_m}^2\!)}_{\mathcal{Q}}\!\! \gtrless_{H_n}^{H_m} \!0.
\]
A similar expression holds for the MMSE detector after scaling the constellation points by $a/b$.
\[
2(s_{x_m} \!\!-\! s_{x_n})r_x \!+\! 2(s_{y_m} \!- \!s_{y_n})r_y\! +\frac{a}{b} \!\underbrace{(s_{x_n}^2 \!\!\!+ s_{y_n}^2\!\! -\! s_{x_m}^2 \!\!-\! s_{y_m}^2\!)}_{\mathcal{Q}}\!\! \gtrless_{H_n}^{H_m} \!0.\]
The decision regions of the ZF and the MMSE filters are in general non-identical. Although they are similar for standard $4$-QAM constellations, 
where due to the constellation symmetry 
\(|s_{x_m}| = |s_{x_n}| = |s_{y_m}| = |s_{y_n}|\) (differing only in sign), the term $\mathcal{Q}$ cancels out. However, for higher-order M-QAM constellations such as 16-QAM and 64-QAM, for example, the symmetry argument is no longer valid. Consequently, $\mathcal{Q}$ does not cancel out, and the MMSE and ZF detectors yield different decision boundaries. Said differently, using the standard decision boundaries $D_k$, the MMSE receiver will result in a performance that is not equivalent to that of the ZF. This observation is commonly accepted, but to the best of our knowledge, has not been formally stated in the literature.
\section{SER Analysis for Gaussian Channels}\label{sec:SER_gaussian}
In this section, we derive the SER for MMSE detectors under a Gaussian channel, based on the system model defined in~\eqref{generic_sys}.
We reuse the result of~\eqref{MMSE_SER} 
after applying the appropriate modifications. That is, the noise variance becomes $2\sigma_n^2(1 - \lvert \lambda \rvert^2)$ instead of $2\sigma_n^2$ and the transmitted symbols become \(s = \bigl(s_x + \sqrt{2}\lambda_x\sigma_n\bigr) + i\bigl(s_y - \sqrt{2}\lambda_y\sigma_n\bigr)
\) instead of $s = s_x + i s_y$. Substituting $f_{|h|^2}(t)$ by its expression given in~\eqref{PDF_|h|2_GS}, equation~\eqref{MMSE_SER} boils down to 
\begin{equation}
    P_{c|s_1} \!= \! \int_0^\infty \!\!e^{-t} \left(1 - Q\left(a_x\sqrt{t}\right)\right)\left(1 - Q\left(a_y\sqrt{t}\right)\right) dt \label{MMSE_final_PC_GAS_corr},
    \end{equation}
    where \(a_y = \frac{(s_y - \sqrt{2}\sigma_n\lambda_y)}{\sigma_n\sqrt{1-\abs{\lambda}^2}}\), and  \(a_x = \frac{(s_x + \sqrt{2}\sigma_n\lambda_x)}{\sigma_n\sqrt{1-\abs{\lambda}^2}}\). Equation~\eqref{MMSE_final_PC_GAS_corr} can be found in closed-form in the independent case ($\lambda = 0$), and whenever  $s_x=s_y$,
\begin{align}
&P_{c|s_1} = -\underbrace{\int_0^\infty e^{-t} \operatorname{erfc}\left(\sqrt{\frac{s_x^2\,t}{2\sigma_n^2}}\right) \, dt}_{I_1}  + \int_0^\infty e^{-t} \, dt + \frac{1}{4} \underbrace{\int_0^\infty e^{-t} \operatorname{erfc}^2\left(\sqrt{\frac{s_x^2\,t}{2\sigma_n^2}}\right) \, dt}_{I_2}, \notag\\
    &= \frac{1}{4} + \frac{1}{\sqrt{1 + 2\sigma_n^2/s_x^2\,}}-\frac{\arctan\left(\sqrt{1 + 2\sigma_n^2/s_x^2\,}\right)}{\pi\sqrt{1 + 2\sigma_n^2/s_x^2\,}},\label{MMSE_final_PC_GAS_unc}
\end{align}
where we expressed the $Q(\cdot)$-function in terms of the complementary error function $\operatorname{erfc}(\cdot)$. The integral $I_1$ can be computed as follows:
\begin{align}
I_{1} &= \int_{0}^{\infty} e^{-t}\,\mathrm{erfc}\left(\sqrt{\frac{s^2_xt}{2 \sigma_n^2}}\right)dt\nonumber\\
&= \frac{4 \sigma_n^2}{s_x^2}\int_{0}^{\infty} u\,e^{-\frac{2 \sigma_n^2}{s_x^2} u^{2}}\,\mathrm{erfc}(u)\,du \nonumber\\
&= \left(1-\frac{1}{\sqrt{1+ 2 \sigma_n^2/s_x^2}}\right),\label{Appremov}
\end{align}
where equation~\eqref{Appremov} is due to~\cite[2.8.5.9]{prudnikov1986integrals}.
For $I_2$, we use~\cite[8.258.2]{zwillinger2014table}, and equation~\eqref{MMSE_final_PC_GAS_unc} follows.
\paragraph*{Approximation analysis}
For general $\lambda$, we derive an approximate expression of equation~\eqref{MMSE_final_PC_GAS_corr}. Rewriting~\eqref{MMSE_final_PC_GAS_corr} in terms of $\text{erfc}(\cdot)$, we obtain
\begin{align}
P_{c|s_1} &=\! 1 \!-\! \frac{1}{2}\! \underbrace{\int_0^\infty \!\!\!e^{-t} \text{erfc}\!\left(\frac{a_y \sqrt{t}}{\sqrt{2}}\right) \! dt}_{I_3} \!-\! \frac{1}{2} \underbrace{\int_0^\infty \!\!e^{-t} \text{erfc}\left(\frac{a_x \sqrt{t}}{\sqrt{2}}\right) \! dt}_{I_4} + \frac{1}{4} \underbrace{\int_0^\infty e^{-t} \text{erfc}\left(\frac{a_y \sqrt{t}}{\sqrt{2}}\right) \text{erfc}\left(\frac{a_x \sqrt{t}}{\sqrt{2}}\right) dt.}_{I_5} \notag\\
    &\approx \frac{a_y}{2\sqrt{2+a_y^2}}+\frac{a_x}{2\sqrt{2+a_x^2}}+\frac{1}{4}\left[\frac{1}{18(2 + a_x^2 + a_y^2)} +  \frac{1/2}{6 + 3a_x^2 + 4a_y^2}\!+\!\frac{1/2}{6 + 3a_y^2 + 4a_x^2}\!+\!\frac{3/4}{3 + 2a_x^2 + 2a_y^2}\!\right],\label{MMSE_final_PC_GAS_corr_app}
\end{align}
where in order to write equation~\eqref{MMSE_final_PC_GAS_corr_app}, we used the result from~\eqref{Appremov} to evaluate $I_3$ and $I_4$ and we approximate $I_5$ using \(
\text{erfc}(x)\approx \frac{1}{6} e^{-x^2} +\frac{1}{2} e^{-\frac{4x^2}{3}}
\)~\cite{1210748}. 

Furthermore, as presented in Section~\ref{sec:asympanalysis}, the conditional error probability can be approximated asymptotically as  
\begin{align}
    P_{e|s_1} &\approx F_{|h|}\!\left(\frac{\sigma_n}{ s_x+\sqrt{2}\sigma_n\lambda_x}\right)+F_{|h|}\!\left(\frac{\sigma_n}{s_y-\sqrt{2}\sigma_n\lambda_y}\right)\notag
    \\&= 2 - \exp\left(-\frac{\sigma_n^2(1-\abs{\lambda}^2)}{2\, (s_x+\sqrt{2}\sigma_n\lambda_x)^2}\right)-\exp\left(-\frac{\sigma_n^2(1-\abs{\lambda}^2)}{2\, (s_y-\sqrt{2}\sigma_n\lambda_y)^2}\right).\label{eq:asym_gaus_corr}
\end{align}

\section{SER Analysis for THz Channels}
\label{sec:SER_THz}
In this section, we derive the SER under ZF and MMSE detectors for indoor and outdoor THz channels using the framework proposed in~\cite{11072420,11078009} for the ZF detector under Rayleigh and $\alpha$-$\mu$ channels. We consider two cases based on the model in equation~\eqref{generic_Thz_sys}: (i) the independent channel and noise assumption, (ii) the copula-based correlation case. In addition, we consider (iii) the correlated system model defined in~\eqref{Sp_Thz_sys}. For the MMSE detector, we analyze the independent scenario in~\eqref{generic_Thz_sys} and the correlated system model in~\eqref{Sp_Thz_sys}.

\subsection{ZF Detector}
In this subsection, we focus on the ZF detector for the outdoor scenario modeled by the MG distribution. The corresponding indoor scenario based on the $\alpha$-$\mu$ fading model has been thoroughly investigated in our previous work~\cite{11072420}. In addition, the Rayleigh fading case has already been studied in~\cite{11078009}. Therefore, these scenarios are omitted here for brevity and to avoid redundancy.
\subsubsection{Independent Channel and Noise}
\label{ind_ZF}
We compute the SER by analyzing the statistics of the ratio between the additive noise and the channel in~\eqref{generic_Thz_sys}. 
We express the effective noise $z$~\eqref{effective_zf} in polar form as
\begin{equation}
    z=\frac{n}{h} =\frac{|n|}{|h|} e^{j (\theta_n-\theta_h)}  =\uprho e^{j \theta_z} =z_x+jz_y,\label{ratio}
\end{equation}
where $\abs{n}$ and $\abs{h}$ are Rayleigh~\eqref{PDF_n_h_ray} and MG~\eqref{PDF_|h|_MG} distributed, respectively. The phase terms, \( \theta_n \) and \( \theta_h \), and their difference, \(\theta_z\), are all uniformly distributed over \([0, 2\pi]\)~\cite{pewsey2013circular}. The PDF of the ratio of the magnitudes, \( \uprho \), denoted as \( f_\uprho(\rho) \) is
\begin{equation}
    f_\uprho(\rho) = \! \int_{0}^{+\infty} y f_{|n|,|h|}(\rho
    y,y) dy = \!\int_{0}^{+\infty} \!y f_{|n|}(\rho y) f_{|h|}(y) \, dy.\label{temp2}
\end{equation}
Substituting~\eqref{PDF_|h|_MG} in~\eqref{temp2}, 
\begin{equation}
    f_\uprho(\rho) = \frac{\rho}{{\sigma_n^2}} \sum_{i=1}^{K}  \frac{\alpha_i}{\nu^{\beta_i}}\!\int_{0}^{\infty}  \! y^{\beta_i+1} \exp\!\left( -\frac{\rho^2 y^2}{2\sigma_n^2} - \frac{\zeta_i}{\nu} y \right) dy,\label{temp3}
\end{equation}
which can be rewritten in terms of the Meijer G-function as
\begin{align}
f_\uprho(\rho) &= \frac{\rho}{{\sigma_n^2}}\sum_{i=1}^{K} \frac{\alpha_i}{\nu^{\beta_i}} \int_{0}^{+\infty} y^{\beta_i+1} \times G_{0,1}^{1,0} \left( \frac{\rho^2}{2\sigma_n^2}y^2 \middle| \begin{array}{c} - \\ 0 \end{array} \right)   G_{0,1}^{1,0} \left(  \frac{\zeta_i}{\nu} y \middle| \begin{array}{c} - \\ 0 \end{array} \right) \, dy. \notag\\
&= \frac{\rho\nu^2}{{\sqrt{2\pi}}\sigma_n^2}\sum_{i=1}^{K} \alpha_i \frac{2^{\beta_i+\frac{3}{2}}}{\zeta_i^{\beta_i+2}}G_{2,1}^{1,2} \left( \frac{2\rho^2\nu^2}{\sigma_n^2\zeta_i^2} \middle| \begin{array}{c} \frac{-1-\beta_i}{2} , \frac{-\beta_i}{2}\\ 0 \end{array}\!\!\right), \notag
\end{align}
where the final equality follows from~\cite{wolfram}. 
Furthermore, because \( \rho \) and \( \theta_z \) are independent, the PDF of the ratio \( z \) is expressed as 
\begin{equation*}
f_{(\uprho,\Theta)}(\rho,\theta) \!=\! \frac{\rho\nu^2}{{{\pi}^\frac{3}{2}}\sigma_n^2} \sum_{i=1}^{K} \!\alpha_i \frac{2^{\beta_i}}{\zeta_i^{\beta_i+2}} G_{2,1}^{1,2} \left(\! \frac{2\rho^2\nu^2}{\sigma_n^2\zeta_i^2} \middle|\!\!\!\! \begin{array}{c} \frac{-1-\beta_i}{2} , \frac{-\beta_i}{2}\\ 0 \end{array} \!\!\!\!\right)\!, 
\end{equation*}
which gives in cartesian coordinates 
\begin{align}
f_{Z_y,Z_x}(z_y, z_x)&= \frac{\nu^2}{{({\pi})^\frac{3}{2}}\sigma_n^2}\sum_{i=1}^{K}  \frac{\alpha_i 2^{\beta_i}}{\zeta_i^{\beta_i+2}}  G_{2,1}^{1,2} \left( \frac{2(z_x^2 + z_y^2)\nu^2}{\sigma_n^2\zeta_i^2} \middle| \begin{array}{c} \frac{-1-\beta_i}{2} , \frac{-\beta_i}{2}\\ 0 \end{array} \right). \label{ZF_final_fz_MG_unc}
\end{align}
Finally, the SER is evaluated by substituting~\eqref{ZF_final_fz_MG_unc} into~\eqref{SER-ZF}.

\subsubsection{Copula Model}
\label{Copula_THz}
We model the correlation between \(h\) and \(n\) using the copula method~\eqref{copula_PDF}. The copula term,\\ \( C(F_{|n|}(\sqrt{z_r^2+z_i^2}\, y), F_{|h|}(y)) \), is evaluated based on~\eqref{FGM} and~\eqref{FRANK} for the FGM and Frank copulas, respectively. The SER \( P_e \) follows from~\eqref{SER-ZF}.
For the outdoor case, the channel magnitude \(|h|\) is MG distributed according to~\eqref{PDF_|h|_MG}. 
The PDF of $z$ is
\begin{align}
    &f_{\Re(z), \Im(z)}(z_r, z_i)\!=\!\sum_{i=1}^{K} \frac{\alpha_i \nu^{-\beta_i}}{2\pi\sigma_{n}^2}\! \int_0^\infty \! y^{\beta_i + 1}\exp\left(\!-\frac{(z_i^2 + z_r^2) y^2}{2\sigma_{n}^2} - \zeta_i y \!\right)
    C(F_{|n|}(\sqrt{z_r^2+z_i^2}\, y), F_{|h|}(y)) dy, \label{MG_copula}
\end{align}
where $F_{|h|}(\cdot)$ and $ F_{|n|}(\cdot)$ are given in~\eqref{CDF_|h|_MG} and~\eqref{PDF_n_h_ray}, respectively.

\subsubsection{Specific THz model}
We address the system model in~\eqref{Sp_Thz_sys}, which incorporates distortion noise from THz-band hardware impairments, exhibiting channel correlation at the receiver~\cite{8610080}. This model can be extended to account for channel-induced molecular absorption noise~\cite{Sarieddeen9514889}. 
For the outdoor case, the analysis for obtaining the PDF of the effective noise in~\eqref{Sp_Thz_sys} is provided in Appendix~\ref{ZF_PC_MG}, where we also determine the probability of error \( P_e \) as follows:
\begin{align}
    &P_{e|s_1}\!=1\!-\!\sum_{i=1}^{K}\!\frac{\alpha_i}{\nu^{\beta_i}} \int_{0}^{\infty}\hspace{-2.5mm}y^{\beta_i+1}\!\left(\!1\!-\!Q\!\left(\frac{s_x}{\sqrt{\bar{\sigma}_{tr}^2\!+\!\frac{\sigma_n^2}{y^2}}}\!\right)\!\right)^2\!\!\exp\left(\!\!-\frac{\zeta_i}{\nu}y\right)dy.\label{ZF_PE_MG_SPsys}
\end{align}
We further analyze the asymptotic behavior of $P_{e|s_1}$ at high SNR. Let \( \bar{\Upsilon} \triangleq \frac{s_x h}{\sqrt{\bar{\sigma}_{tr}^2 h^2 + \sigma_n^2}} \), then, equation~\eqref{ZF_PE_MG_SPsys} becomes
\begin{align}
    P_{e|s_1} &= \!1-\mathbb{E}_{|h|}\left[\left(1\!-\!Q \left(\bar{\Upsilon} \right)\right)^2\right]= 2\mathbb{E}_{|h|}\left[Q \left(\bar{\Upsilon}\right) \right]-\mathbb{E}_{|h|}\left[Q^2 \left(\bar{\Upsilon}\right)\right],\notag
\end{align}
where the PDF, \( f_{|h|}(y) \), is given in~\eqref{PDF_|h|_AM}. We analyze the probability of a ``deep-fade" event. At high SNR (small $\sigma_n$), equation~\eqref{eq:deepfade} implies
\begin{equation}
    P_{e|s_1}  \approx 2\mathbb{P}(\bar{\Upsilon} < 1) =  2F_{|h|}\left(\frac{\sigma_n}{\sqrt{s_x^2-\bar{\sigma}_{tr}^2}}\right),\label{eq:out_Asym}
\end{equation}
where $F_{|h|}(.)$ is defined in~\eqref{CDF_|h|_MG}. For small $\sigma$, the CDF argument is small and we approximate it as \(F_{|h|}(x)\approx \sum_{i=1}^{K} \frac{\alpha_i \zeta_i^{\beta_i}}{\beta_i \nu^{\beta_i}} y^{\beta_i},\) resulting in an asymptotic expression $P_e \!\propto\! \left( \frac{s_x^2 - \bar{\sigma}_{tr}^2}{\sigma_n^2} \right)^{-\frac{\beta_i}{2}}$, where $\frac{s_x^2 - \bar{\sigma}_{tr}^2}{\sigma_n^2}$ represents the ratio between the symbol minus channel-dependent noise powers and the channel-independent noise power. The asymptotic slope, $\frac{\beta_i}{2}$, matches that of~\cite{11362947,10663782} for the SISO case with independent channel and noise.

\subsection{MMSE Detector}
We will examine two scenarios: the first assumes an independent channel and noise, modeled using~\eqref{generic_Thz_sys}, while the second considers the dependency case, represented by the system model in~\eqref{Sp_Thz_sys}.
\subsubsection{Independent Channel and Noise}
We use equation~\eqref{MMSE_SER} for \(P_{{c|s_1}}\), 
where \( f_{|h|^2}(t) \) is given by~\eqref{PDF_|h|2_MG} for the outdoor scenario and by~\eqref{PDF_|h|2_AM} for the indoor scenario. Next, we find approximations and asymptotic expressions.
\paragraph{Outdoor THz-band channels}
We have
\begin{align}
    P_{c|s_1} &= \sum_{i=1}^{K} \frac{\alpha_i}{2 \nu^{\beta_i}} \int_0^\infty t^{\frac{\beta_i}{2}-1} e^{-\frac{\zeta_i}{\nu} \sqrt{t}} \left(\!\! 1 -\! Q\left( \sqrt{\frac{s_x^2 t}{\sigma_{n}^2}}\! \right)\!\!\! \right)\left( \!1\! -\! Q\left(\!\! \sqrt{\frac{s_y^2 t}{\sigma_{n}^2}}\!\right)\!\!\right) dt. \label{MMSE_MG_PC_unc} 
\end{align}
Using Lemma~\ref{Theorem1} in Appendix~\ref{app:integral}, equation~\eqref{MMSE_MG_PC_unc} can be approximated as
\begin{align}
&P_{c|s_1} \approx \sum_{i=1}^{K} \alpha_i \left[ \frac{\Gamma\left(\beta_i\right)}{\zeta_i^{\beta_i}} - \frac{1}{\nu^{\beta_i}} \left( \Phi\left(\beta_i, \frac{\zeta_i}{\nu}, 1, \frac{s_x}{\sqrt{2}\sigma_n}\right) + \Phi\left(\beta_i, \frac{\zeta_i}{\nu}, 1, \frac{s_y}{\sqrt{2}\sigma_n}\right) \right) +\frac{1}{8 \nu^{\beta_i}} \sum_{j=1}^{4} \Psi(D_j, C_j) \right], \label{MMSE_MG_PC_unc_app}
\end{align}
where $\Psi(D_j, C_j)$ and $\Phi(\cdot,\cdot,\cdot,\cdot)$ are defined in Lemma~\ref{Theorem1}. Furthermore, as shown in Section~\ref{sec:asympanalysis}, the asymptotic expression is
\begin{equation}
    P_{e|s_1} \approx 
    2F_{|h|}\!\left(\frac{\sigma_n}{s_x}\right)
    = 2\sum_{i=1}^{K} \alpha_i\, \zeta_i^{-\beta_i}\,
    \gamma\!\left(\beta_i,\,
    \frac{\zeta_i}{\nu}\,
    \frac{\sigma_n}{s_x}\right),\label{Out_MMSE_asym}
\end{equation}
where \(F_{|h|}(\cdot)\) is expressed in~\eqref{CDF_|h|_MG}.
\paragraph{Indoor THz-band channels}
We have 
\begin{align}
P_{c|s_1} = &\varphi\int_0^\infty t^{\frac{\alpha \mu }{2}-1} \exp\left(-\frac{\mu t^{\alpha/2}}{(\nu \hat{Z})^\alpha} \right) \left(\!1 \!-\! Q\left( \sqrt{\frac{s_x^2 t}{\sigma_n^2}} \right)\!\right)\!\left( \!1\! -\! Q\left(\! \sqrt{\frac{s_y^2 t}{\sigma_n^2}}\right)\!\right) dt, \label{MMSE_AM_PC_unc}
\end{align}
where \(\varphi \triangleq \frac{\alpha \mu^\mu}{2 (\hat{Z}\nu)^{\alpha \mu}\Gamma(\mu)}\). Again, using Lemma~\ref{Theorem1}, we get the following approximation:
\begin{align}
&P_{c|s_1} \approx 1-\varphi \left[  - \Phi\left(\alpha \mu, \frac{\mu}{(\nu \hat{Z})^\alpha}, \alpha, \frac{s_x}{\sqrt{2}\sigma_n}\right) - \Phi\left(\alpha \mu, \frac{\mu}{(\nu \hat{Z})^\alpha}, \alpha, \frac{s_y}{\sqrt{2}\sigma_n}\right) \!+\! \frac{1}{4} \sum_{j=1}^{4} \Psi(D_j, C_j) \right]. \label{MMSE_AM_PC_unc_app}
\end{align}
At high SNR, the asymptotic expression given by equation~\eqref{eq:deepfade} boils down to 
\begin{equation}
    P_{e|s_1} \approx 2\,F_{|h|}\!\left(\frac{\sigma_n}{s_x}\right)
    = 2\,\frac{\gamma\!\left(\mu,\, \mu\left(\frac{\sigma_n}{s_x\, \hat{Z}\, \nu}\right)^{\!\alpha}\right)}{\Gamma(\mu)}.\label{eq:asym_indoor_4Qam}
\end{equation}
\subsubsection{Correlated Channel and Noise}\label{MMSE_ZF_corr}
Using the MMSE detector, the signal received under the channel model of~\eqref{Sp_Thz_sys} is similar to that given by~\eqref{MMSE_filter}, however, the noise variance is modified to include both thermal and distortion noises as follows:
\begin{equation}
    \tilde{r} = \frac{h^* \sigma_s^2}{|h|^2 \sigma_s^2 + 2\sigma_{trn}^2} r, \quad 2\sigma_{trn}^2 = 2\bar{\sigma}_{tr} |h|^2 + 2\sigma_n^2. \label{MMSE_filter_THz}
\end{equation}
\paragraph{Outdoor THz-band channels}
Using equations~\eqref{MMSE_SER} and~\eqref{PDF_|h|2_MG}, we obtain
\begin{align}
P_{c|s_1} &=\sum_{i=1}^{K}  \frac{\alpha_i}{2 \nu^{\beta_i}} \int_0^\infty t^{\frac{\beta_i }{2}-1}  \exp\left({-\frac{\zeta_i}{\nu} \sqrt{t}}\right)\left(\!\! 1 -\! Q\left( \sqrt{\frac{s_x^2 t}{\bar{\sigma}_{tr} t+\sigma_n^2}}\! \right)\!\!\! \right)\!\!\!\left( \!1\! -\! Q\left(\!\! \sqrt{\frac{s_y^2 t}{\bar{\sigma}_{tr} t+\sigma_n^2}}\!\right)\!\!\right)\ dt.\label{MMSE_MG_PC_corr}
\end{align}
Equation~\eqref{MMSE_MG_PC_corr} cannot be solved directly using Lemma~\ref{Theorem1}. However, one can still find the asymptotic expression at high SNR ($\sigma_n\rightarrow0$): 
\begin{eqnarray}
    P_{e|s_1} &\approx& 2F_{|h|}\left(\!\frac{\sigma_n}{\sqrt{s_x^2 - \bar{\sigma}_{tr}}}\!\right) = 2\sum_{i=1}^{K} \alpha_i\, \zeta_i^{-\beta_i}\,
    \gamma\!\left(\beta_i,
    \frac{\zeta_i\sigma_n}{\nu\sqrt{s_x^2 - \bar{\sigma}_{tr}}}\right),
\end{eqnarray}  
where \(F_{|h|}(\cdot)\) is given in~\eqref{CDF_|h|_MG}.
\paragraph{Indoor THz-band channels}
Using equations~\eqref{PDF_|h|2_AM} and~\eqref{MMSE_SER}, 
\begin{align}
P_{c|s} & = \varphi \int_0^\infty t^{\frac{\alpha \mu }{2}-1} \exp\left(-\frac{\mu t^{\alpha/2}}{(\nu \hat{Z})^\alpha} \right) \left(\! 1 -\! Q\left(\!\sqrt{\frac{s_x^2 t}{\bar{\sigma}_{tr} t+\sigma_n^2}}\right)\! \right)\!\left( \!1\! -\! Q\left(\!\sqrt{\frac{s_y^2 t}{\bar{\sigma}_{tr} t+\sigma_n^2}}\right)\!\right) dt. \label{MMSE_AM_PC_corr}
\end{align}
The asymptotic expression at high SNR is expressed as 
\begin{equation}
    P_{e|s_1} \approx 
    2F_{|h|}\!\left(\frac{\sigma_n}{\sqrt{s_x^2 - \bar{\sigma}_{tr}}}\right)
    =2 \frac{\gamma\!\left(\mu,\, \mu \left(\frac{\sigma_n}{\left(\sqrt{s_x^2 - \bar{\sigma}_{tr}}\right)\hat{Z}\nu}\right)^{\!\alpha}\right)}{\Gamma(\mu)}.
\end{equation}
\section{Simulation Results}\label{sec:simultaion}
This section presents simulation results that validate the theoretical analysis. The simulations were conducted using realistic parameters to ensure accuracy and reliability, with the implementation carried out in MATLAB. We note that the simulation results for ZF detection under Gaussian channels and indoor THz fading are based on the analytical framework developed in our prior works~\cite{11072420,11078009}. In addition to these established results, we extend the numerical analysis to outdoor THz environments for the ZF detector, as well as to the MMSE detector for both Gaussian and THz channels in indoor and outdoor scenarios.

\subsection{Gaussian Channel}

The SER of the linear ZF and MMSE detectors is numerically evaluated using Monte Carlo simulations with $N = 5\times10^6$ independent trials. The simulated results are compared with the proposed analytical expressions. The communication channel is modeled under Rayleigh fading, potentially correlated with additive white Gaussian noise (AWGN). Normalized, equiprobable $M$-QAM constellations are considered for $M = 4$ and $M = 16$. Specifically, for the $4$-QAM case, the constellation symbols $s_x, s_y$ take values from $\{\pm1\}$, whereas for the $16$-QAM case, they take values from $\{\pm1, \pm3\}$.

The simulation results are illustrated in Fig.~\ref{fig:Gaussian_4QAM}, where both ZF and MMSE detectors are evaluated under independent and correlated fading scenarios. For 4-QAM modulation, the SER performance of the ZF and MMSE detectors is identical, confirming the theoretical analysis presented in Section~\ref{sec:Post_filtering}. Specifically, for the independent case ($\lambda = 0$), the theoretical SER curves correspond to~\eqref{MMSE_final_PC_GAS_unc} and~\cite[Eq. 8]{11078009} for the MMSE and ZF detectors, respectively. In the correlated case, where $\lambda \in \{\,0.7 + 0.7j,\; 0.65 - 0.6j\,\}$, the theoretical SER expressions are given by~\eqref{MMSE_final_PC_GAS_corr_app} and~\cite[Eq. 8]{11078009} for the MMSE and ZF detectors, respectively, with the asymptotic behavior described by~\eqref{eq:asym_gaus_corr}. The simulation results show excellent agreement with the corresponding theoretical predictions. 
The analysis is further extended to higher-order modulations.
Fig.~\ref{fig:Gaussian_16QAM} shows the SER for 16-QAM. For instance, the ZF presents a gain of \unit[1.13]{dB} at SNR equal \unit[10]{dB} over the MMSE detector. 
This performance difference is due to the mismatch in the MMSE detection regions, as theoretically derived in Section~\ref{sec:Post_filtering}. The average SER is obtained using the effective noise PDF $f_{z_x,z_y}$ given by~\cite[Eq. 7]{11078009} for the ZF detector and by~\eqref{P_C_M_QAM} for the MMSE detector.
In addition, Figures~\ref{fig:Gaussian_4QAM} and~\ref{fig:Gaussian_16QAM} reveal that the SER obtained under the independent fading assumption serves as an upper bound to the actual SER in the correlated case. This highlights the tendency of the independent-case analysis to overestimate the SER when correlation effects are neglected. For example, at $\abs{\lambda} \approx 1$, corresponding to a highly correlated channel, the SER performance deviates significantly from that predicted under the independence assumption~\cite{11072420,11078009}.

\begin{figure}[ht!]
    \centering
    \includegraphics[width=0.5\textwidth]{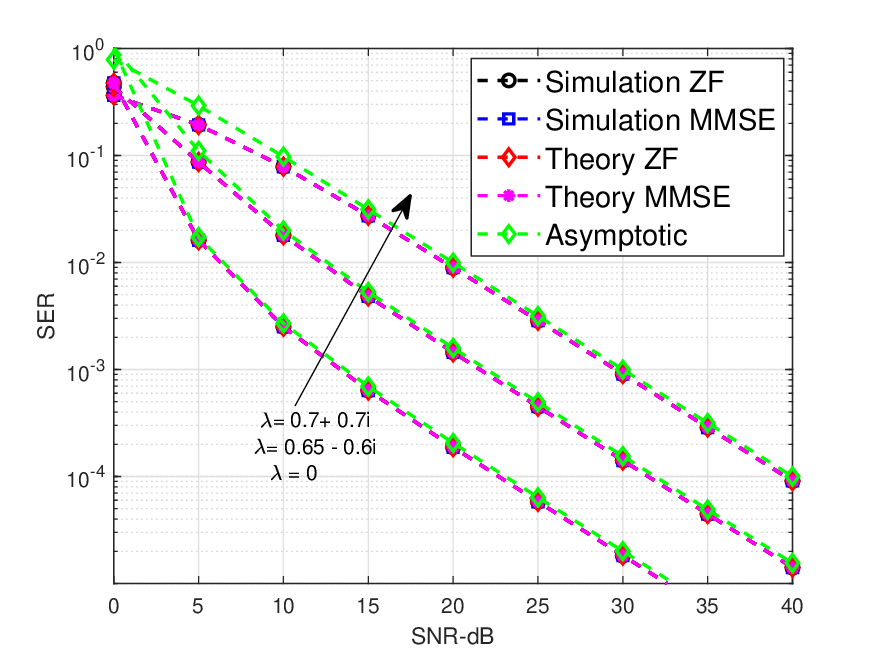}
    \caption{SER vs. SNR for 4-QAM with ZF and MMSE detection under Rayleigh fading.}
    \label{fig:Gaussian_4QAM}
\end{figure}
\begin{figure}[ht!]
    \centering
    \includegraphics[width=0.5\textwidth]{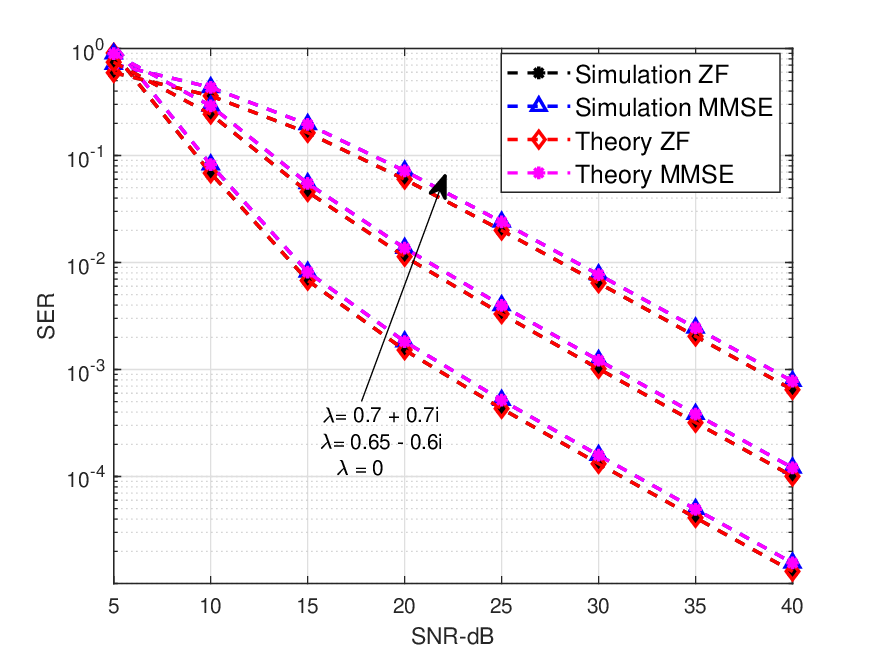}
    \caption{SER vs. SNR for 16-QAM with ZF and MMSE detection under Rayleigh fading.}
    \label{fig:Gaussian_16QAM}
\end{figure}

\subsection{THz Channel}
\label{Performance_Evaluation}

For the THz setup, the noise power is defined as $k_{\mathrm{B}}TB$, where $k_{\mathrm{B}}$ denotes the Boltzmann constant, $T = \unit[300]{K}$ is the absolute temperature, and $B$ represents the system bandwidth. The carrier frequency is set to $f = \unit[0.142]{THz}$ with a bandwidth of $B = \unit[4]{GHz}$. The antenna gains are $G_r = \unit[19]{dBi}$ and $G_t = \unit[0]{dBi}$~\cite{papasotiriou2023outdoor}. The transmit power $p_t$ is varied to span different SNR regimes.

\subsubsection{Indoor THz}
Fast fading is modeled using the $\alpha$--$\mu$ distribution. The simulations assume \{16, 4\}-QAM modulation. The communication distance is varied according to practical parameter values, with $\alpha = 3.20544$, $\mu = 7.80577$, and $\hat{Z} = 14.0391$, as reported in~\cite[Table~I]{11072420}. It is worth noting that part of the presented results are recalled from our previous work~\cite{11072420} for completeness.

Figure~\ref{fig:indoor_4_16QAM1} illustrates the SER versus SNR performance of the MMSE and ZF detectors under $4$-QAM and $16$-QAM modulations. For the $4$-QAM case, the theoretical SER for the MMSE detector, given by~\eqref{MMSE_AM_PC_unc}, along with its approximation in~\eqref{MMSE_AM_PC_unc_app} and the asymptotic expression in~\eqref{eq:asym_indoor_4Qam}, are compared with the simulation results. The theoretical SER for the ZF detector is provided in~\cite[Eq. 13]{11072420}. The strong agreement between theoretical and simulated results validates the accuracy of the proposed SER expressions. As expected and consistent with the analysis in Section~\ref{sec:Post_filtering}, the MMSE and ZF detectors exhibit identical performance for $4$-QAM.
The same figure also presents the SER performance under $16$-QAM. The theoretical SER expression for the MMSE detector is given in~\eqref{P_C_M_QAM}, whereas for the ZF detector, the average SER is obtained using the effective noise PDF $f_{z_x,z_y}$~\cite[Eq. 11]{11072420} and the SER expression in~\eqref{SER-ZF}. Under mismatched decision regions, the MMSE detector performs worse than the ZF detector under $16$-QAM. However, at low SNR, the mismatch is relatively small, \unit[0.95]{dB} at SNR equal \unit[5 dB]{}, which justifies the practical use of MMSE with mismatched regions. This choice not only simplifies implementation but also provides improved numerical stability compared to ZF.
\begin{figure}[t!]
    \centering
    \includegraphics[width=0.5\textwidth]{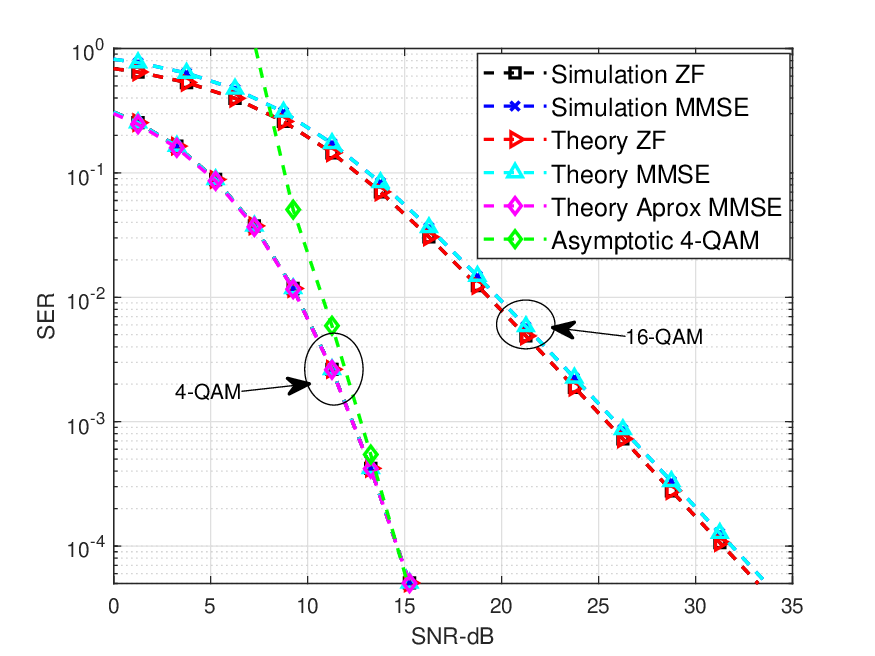}
    \caption{SER vs. SNR for ZF and MMSE detection over indoor THz channels with independent channel and noise.}
    \label{fig:indoor_4_16QAM1}
\end{figure}


\subsubsection{Outdoor THz}
We extend the performance analysis to an outdoor THz channel using an MG fading model. The results presented in figure~\ref{fig:ZF_Out_fgm_ind}, ~\ref{fig:ZF_Out_frank_ind}, and~\ref{fig:ZF_Out_newM} provide a comprehensive analysis of SER performance of ZF detection for outdoor THz channels, under various scenarios.
\begin{figure}[hpbt!]
    \centering
    \includegraphics[width=0.5\textwidth]{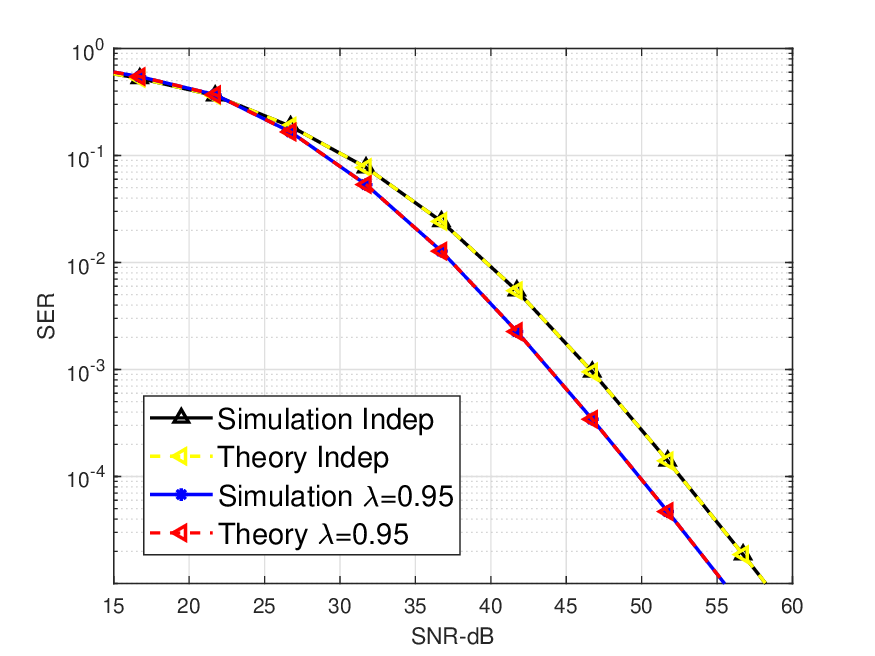}
    \caption{SER vs. SNR for ZF detection over outdoor THz channels with correlated channel and noise using the FGM copula.}
    \label{fig:ZF_Out_fgm_ind}
\end{figure}
\begin{figure}[hpbt!]
    \centering
    \includegraphics[width=0.5\textwidth]{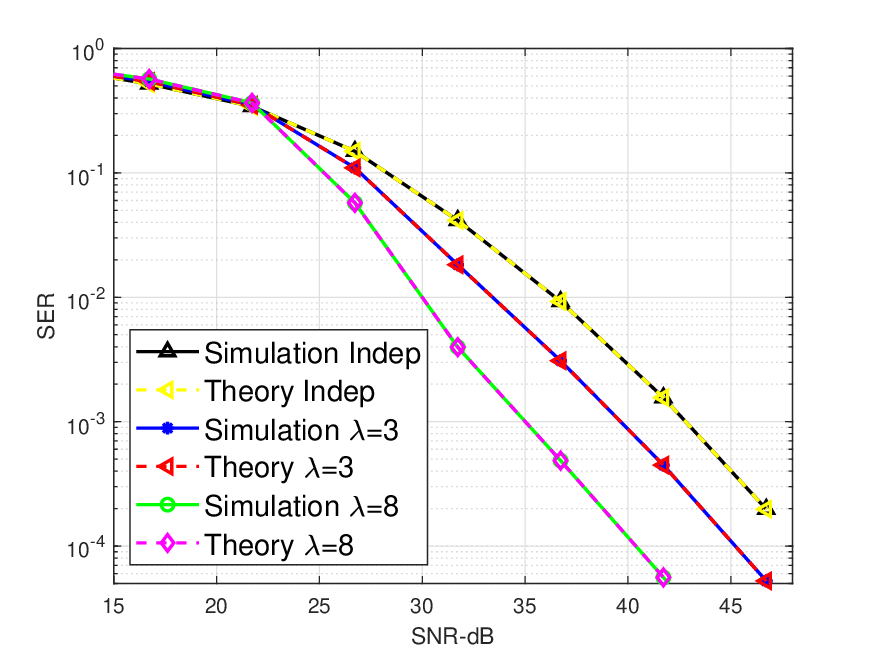}
    \caption{SER vs. SNR for ZF detection over outdoor THz channels with correlated channel and noise using the Frank copula.}
    \label{fig:ZF_Out_frank_ind}
\end{figure}
\begin{figure}[hpbt!]
    \centering
    \includegraphics[width=0.5\textwidth]{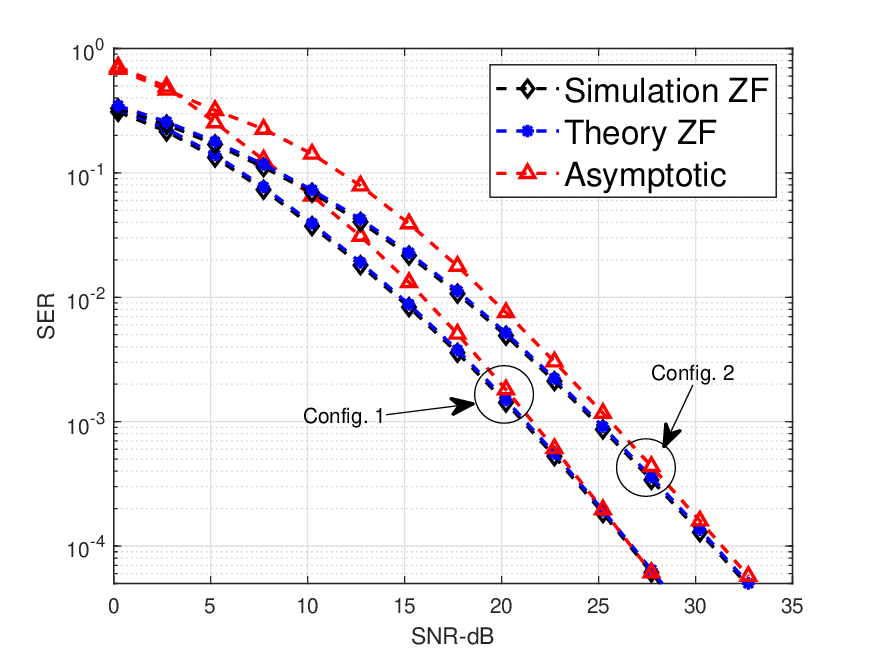}
    \caption{SER vs. SNR for ZF detection over outdoor THz channels with distortion noise.}
    \label{fig:ZF_Out_newM}
\end{figure}

\begin{figure}[t!]
    \centering
    \includegraphics[width=0.5\textwidth]{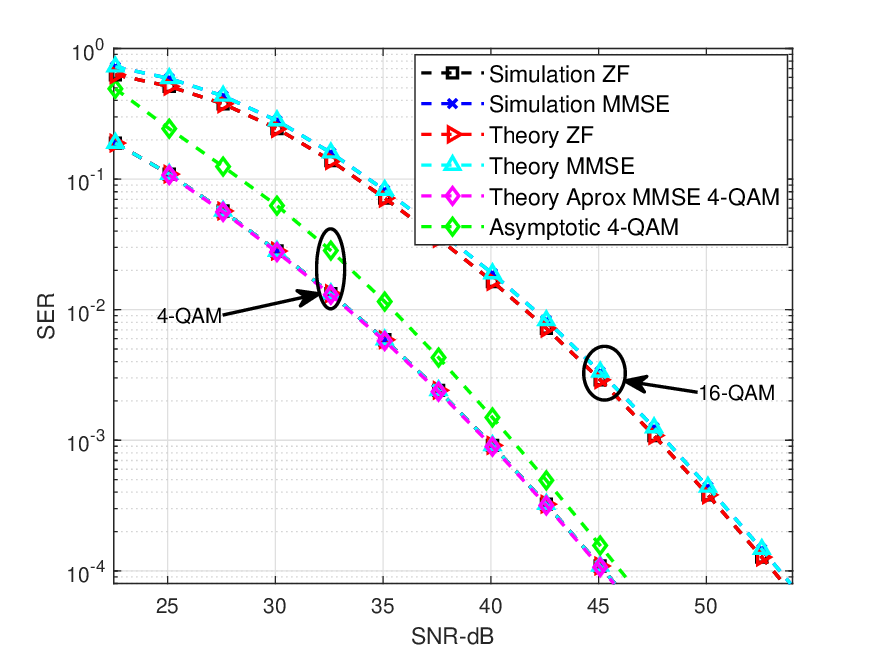}
    \caption{SER for THz outdoor scenarios under independent channel and noise for MMSE and ZF with \{16,4\}-QAM, where MG parameters follow Config 1 from Table~\ref{tab:MG}.}
    \label{fig:MG_MMSE_ZF}
\end{figure}

\begin{table}[!t]
\centering
\caption{Parameter configurations for MG.}
\resizebox{0.5\columnwidth}{!}{%
\begin{tabular}{c c c c c}
\hline
$K$ & $w$ & $\beta$ & $\zeta$ & \textbf{Config} \\
\hline
\multirow{2}{*}{K=2} & 0.67540627 & 15.2709327  & 0.069986341 & \multirow{2}{*}{Config 1\cite{papasotiriou2023outdoor}} \\
 & 0.32459373 & 4.417045104 & 0.153163953 &  \\
\hline
\multirow{2}{*}{K=2} & 0.512500204 & 3.75341946  & 0.159416427 & \multirow{2}{*}{Config 2\cite{papasotiriou2023outdoor}} \\
 & 0.487499796 & 22.59894871 & 0.054461913 &  \\
\hline
\end{tabular}%
}
\label{tab:MG}
\end{table}
Figures~\ref{fig:ZF_Out_fgm_ind} and \ref{fig:ZF_Out_frank_ind} illustrate the SER performance for both independent and correlated cases using the FGM and Frank copula models, respectively, where the MG parameters are selected according to Config~1 in Table~\ref{tab:MG}. The theoretical results are obtained from \eqref{P_C_M_QAM}, in which the average SER is evaluated using the effective noise PDF $f_{z_x,z_y}$ given in \eqref{MG_copula}. For the independent case, the correlation parameter is set to $\lambda=0$. 
Figure~\ref{fig:ZF_Out_newM} illustrates the impact of receiver distortion noise for the two configurations presented in Table~\ref{tab:MG}. The theoretical results are obtained from~\eqref{ZF_PE_MG_SPsys}, while the asymptotic results are derived from~\eqref{eq:out_Asym}, assuming $\kappa_t = \kappa_r = 0.2$. As observed, the asymptotic analysis closely matches the theoretical curves in the high-SNR region, confirming the accuracy of the derived expressions. 

Figure~\ref{fig:MG_MMSE_ZF} extends the simulation analysis to jointly evaluate the MMSE and ZF detectors for both 4-QAM and 16-QAM modulation schemes. For 4-QAM, the theoretical SER curves for the ZF detector are obtained by substituting~\eqref{ZF_final_fz_MG_unc} into~\eqref{SER-ZF}, whereas the MMSE detector results are derived from~\eqref{MMSE_MG_PC_unc}. The corresponding approximation for the MMSE detector follows from~\eqref{MMSE_MG_PC_unc_app}, and the asymptotic behavior is characterized by~\eqref{Out_MMSE_asym}. The theoretical and simulation results show an excellent agreement, with the approximation curves also accurately capturing the SER trends.
The figure further extends the analysis to 16-QAM, where a difference between the ZF and MMSE detector performance is observed, as expected. The average SER for the ZF detector is computed using the effective noise PDF $f_{z_x,z_y}$ from~\eqref{ZF_final_fz_MG_unc} substituted into~\eqref{SER-ZF}, whereas for the MMSE detector the expression in~\eqref{P_C_M_QAM} is used. 

\section{Conclusion}\label{sec:conclusion}
This paper provides a detailed performance analysis of linear detection schemes under channel-noise correlation, namely mismatched MMSE detectors, over Rayleigh, $\alpha$-$\mu$, and mixture of Gamma channels- the latter two being well-known models of indoor and outdoor THz communication scenarios. Moreover, we complement the analysis conducted in~\cite{11072420,11078009} for the ZF detector over Rayleigh and $\alpha$-$\mu$ channels, by considering the mixture of Gamma model.  Semi-analytical, approximate, and asymptotic expressions for the SER were derived while explicitly incorporating channel-noise correlation through both statistical and copula-based modeling approaches. The derived analytical results were shown to closely match the Monte Carlo simulation outcomes, demonstrating the accuracy and robustness of the proposed framework. Furthermore, the analysis highlighted the significant impact of the channel-noise correlation on detector performance, leading to noticeable SER deviations compared to uncorrelated cases. The presented results establish a reliable analytical foundation for evaluating and optimizing linear detectors in future THz communication systems. It is worth noting that the MMSE detector inherently introduces a bias in the detected symbols due to its filtering term. Although this bias reduces noise amplification, it slightly shifts the standard ZF decision boundaries in higher order modulations ($M > 4$).
The analysis of the MMSE detector, when operating under ZF decision regions, is motivated by its robustness to noise amplification, particularly in ill-conditioned channel conditions. We show that the resulting degradation gap remains limited (e.g., approximately 1~dB for 16-QAM at 5~dB SNR), while offering improved robustness in deep fading scenarios where ZF significantly amplifies noise. Overall, these findings demonstrate that MMSE, even under suboptimal decision regions, provides a favorable trade-off between noise enhancement and detection reliability, making it a practical and robust alternative to ZF in realistic THz systems.

\appendices
\counterwithin*{equation}{section}
\renewcommand\theequation{\thesection.\arabic{equation}}
\appendices
\section{Technical Lemma} 
\label{app:integral}

\begin{lemma}
\label{Theorem1}
Let $a > 1$, and let $b$, $c$, $n$, and $\delta$ be positive numbers. Then
\begin{align}
I &= \hspace{-0.5pt} \!\int_0^\infty \! \hspace{-0.5pt} x^{a-1} \exp\left(-b x^{n}\right) \left(1 - Q\left(\sqrt{ cx} \right)\right)\left(1 - Q\left(\sqrt{\delta x} \right)\right) \!dx \notag\\
&\approx \frac{1}{n b^{\frac{a}{n}}} \Gamma\left(\frac{a}{n}\right) - \Phi\left(2a, b, 2n, \frac{\sqrt{c}}{\sqrt{2}}\right) - \Phi\left(2a, b, 2n, \frac{\sqrt{\delta}}{\sqrt{2}}\right) + \frac{1}{4} \sum_{j=1}^{4} \Psi(D_j, C_j),\notag
\end{align}  
where
\begin{eqnarray*}
\textbf{C} = \left[\frac{c + \delta}{2}, \frac{2c + \delta}{3}, \frac{c + 2\delta}{3}, \frac{2(c + \delta)}{3}\right],\qquad
\textbf{D} = [1/36, 1/12, 1/12, 1/4],
\end{eqnarray*}
and for $1 \leq j \leq 4$
\begin{align}
\Psi(D_j,C_j) \!=\!D_j \frac{k^{0.5}l^{a-0.5}}{(2\pi)^{\frac{k+l}{2}-1}C_j^a} G_{l,k}^{k,l}\!\left(\frac{(b/k)^k}{(C_j/l)^l}\middle| \begin{array}{c}I(l,1-a)\\I(k,0)\end{array}\right),\notag
\end{align}
where $n \!=\! \frac{l}{k}$, \(l\) and \(k\) are coprime (ensuring non-integer values of $n$ are accounted for), and $ I(p, q) \!=\! \frac{q}{p}, \frac{q+1}{p}, \ldots, \frac{q +p-1}{p} $.
\end{lemma}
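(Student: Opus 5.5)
Expanding the product of the two factors $1-Q(\cdot)$ splits $I$ into four integrals. Writing $Q(x)=\tfrac12\operatorname{erfc}(x/\sqrt2)$ gives
\begin{align*}
I = \int_0^\infty x^{a-1}e^{-bx^n}\,dx - \tfrac12 J(c) - \tfrac12 J(\delta) + \tfrac14 \int_0^\infty x^{a-1}e^{-bx^n}\operatorname{erfc}\!\left(\sqrt{cx/2}\right)\operatorname{erfc}\!\left(\sqrt{\delta x/2}\right)dx,
\end{align*}
where $J(c)=\int_0^\infty x^{a-1}e^{-bx^n}\operatorname{erfc}(\sqrt{cx/2})\,dx$. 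The first term is exact: the substitution $u=bx^n$ turns it into a gamma integral equal to $\Gamma(a/n)/(n b^{a/n})$.

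For the two single-erfc terms, substitute $x=y^2$, so $dx=2y\,dy$. This gives $J(c)=2\int_0^\infty y^{2a-1}e^{-by^{2n}}\operatorname{erfc}(\sqrt{c/2}\,y)\,dy$, which has the form of a generic integral $\int y^{p-1}e^{-by^{q}}\operatorname{erfc}(\kappa y)\,dy$ with $p=2a$, $q=2n$, $\kappa=\sqrt{c}/\sqrt2$. I will take $\Phi(p,b,q,\kappa)$ to be defined as this integral; it is the same object used in~\eqref{MMSE_MG_PC_unc_app} and~\eqref{MMSE_AM_PC_unc_app}. The factor $2$ cancels the $\tfrac12$, so these terms contribute exactly $-\Phi(2a,b,2n,\sqrt c/\sqrt2)-\Phi(2a,b,2n,\sqrt\delta/\sqrt2)$. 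If a closed form of $\Phi$ is wanted, write $\operatorname{erfc}$ and the exponential as Meijer G-functions and apply the Mellin convolution theorem, as done for~\eqref{ZF_final_fz_MG_unc}. I will fix $\Phi$ by this definition and not pursue the closed form further.

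The product term is the only place an approximation enters, and it is the main step. Apply $\operatorname{erfc}(x)\approx\tfrac16e^{-x^2}+\tfrac12e^{-4x^2/3}$ from~\cite{1210748} to each factor, with $x^2=cx/2$ and $x^2=\delta x/2$ respectively. Multiplying out gives four exponentials $e^{-C_jx}$ with weights $D_j$:
\begin{itemize}
\item $\tfrac1{36}$ with rate $(c+\delta)/2$;
\item $\tfrac1{12}$ with rate $c/2+2\delta/3$;
\item $\tfrac1{12}$ with rate $2c/3+\delta/2$;
\item $\tfrac14$ with rate $2(c+\delta)/3$.
\end{itemize}
These match $\mathbf C$ and $\mathbf D$ in the statement, up to the form in which the two mixed rates are written. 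I need to recheck the stated entries $(2c+\delta)/3$ and $(c+2\delta)/3$ against the computed mixed rates. A mismatch there would be a typo in the constants, not a flaw in the method. Each resulting integral has the form $\int_0^\infty x^{a-1}e^{-bx^n}e^{-C_jx}\,dx$.

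To evaluate these, express both exponentials as $G^{1,0}_{0,1}$ functions. For rational $n=l/k$ with $l,k$ coprime, apply the standard formula for the Mellin transform of a product of two G-functions with arguments $bx^{l/k}$ and $C_jx$, e.g.\ Prudnikov et al.~2.24.1.1 together with Gauss' multiplication formula. This formula produces a $G^{k,l}_{l,k}$ with argument $(b/k)^k/(C_j/l)^l$, parameter lists $I(l,1-a)$ and $I(k,0)$, and prefactor $k^{1/2}l^{a-1/2}(2\pi)^{1-(k+l)/2}C_j^{-a}$. That is exactly $\Psi(D_j,C_j)/D_j$. I expect the main obstacle to be the bookkeeping in this step: getting the multiplication-formula constants and the orientation of the argument right, which depends on whether one scales the $x$ variable or the $x^n$ variable first.

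Convergence holds because $b,C_j>0$. The hypotheses $a>1$ and positivity of the parameters ensure all Mellin integrals exist, which justifies splitting $I$ into four integrals. Summing the four pieces gives the stated approximation, and its only error is the error of the erfc approximation inside the cross term.
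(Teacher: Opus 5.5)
Your proposal is correct and follows the paper's proof essentially step for step. You use the same erfc expansion into four integrals, the exact gamma integral for the first term, the substitution $x=y^2$ to reduce the single-erfc terms to $\Phi$ as the paper defines it, and the two-exponential erfc approximation to turn the cross term into integrals $\int_0^\infty x^{a-1}e^{-bx^n-C_jx}\,dx$ evaluated as $G^{k,l}_{l,k}$ functions; where the paper only cites a prior equation and Wolfram for that last step, the Mellin--Barnes computation with Gauss's multiplication formula that you outline does reproduce the stated prefactor, argument and parameter lists. The mismatch you flagged in $\mathbf{C}$ is real: the mixed rates are $(3c+4\delta)/6$ and $(4c+3\delta)/6$, not $(2c+\delta)/3$ and $(c+2\delta)/3$, and the correct values agree with the denominators $6+3a_x^2+4a_y^2$ and $6+3a_y^2+4a_x^2$ that the paper itself uses in~\eqref{MMSE_final_PC_GAS_corr_app}.
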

\begin{proof}
We rewrite $I$ in terms of $\text{erfc}(.)$:
\begin{align}
I &=\underbrace{\int_0^\infty\!x^{a-1} e^{-b x^{n}} dx}_{I_1} - \frac{1}{2} \underbrace{\int_0^\infty x^{a-1} e^{-b x^{n}} \text{erfc}\left(\frac{\sqrt{cx}}{\sqrt{2}}\right) dx}_{I_2} - \frac{1}{2} \underbrace{\int_0^\infty x^{a-1} e^{-b x^{n}} \text{erfc}\left(\frac{\sqrt{\delta x}}{\sqrt{2}}\right) dx}_{I_3} \notag\\&+ \frac{1}{4} \underbrace{\int_0^\infty x^{a-1} e^{-b x^{n}} \text{erfc}\left(\frac{\sqrt{cx}}{\sqrt{2}}\right) \text{erfc}\left(\frac{\sqrt{\delta x}}{\sqrt{2}}\right) dx}_{I_4}.    
\end{align}
We first note that \(I_1 =\frac{1}{n b^{\frac{a}{n}}} \Gamma\left(\frac{a}{n}\right)
\)~\cite[3.326.2]{zwillinger2014table}. For $I_2$ and $I_3$, we write them in terms of the function \(\Phi(u, p, r, \omega)\) defined as~\cite[eq. (2.8.1.5)]{prudnikov1986integrals}:
\begin{equation}
\Phi(u, p, r, \omega) = \int_{0}^{\infty} x^{u-1} e^{-p x^r} \operatorname{erfc}(\omega x) \,dx,\label{temp5}
\end{equation}
where, for $r>1$,
\begin{align*}
\Phi (u, p, r, \omega) &= - \frac{\omega}{\sqrt{\pi} p^{(u+1)/r} r} \sum_{k=0}^{\infty} \frac{(-1)^k}{(k + 1/2) k!} 
\Gamma \left( \frac{2k + u + 1}{r} \right) \left( \frac{\omega}{p^{1/r}} \right)^{2k} + \frac{1}{p^{u/r} r} \Gamma \left( \frac{u}{r} \right),
\end{align*}
and for $r\leq1$
\begin{equation*}
\Phi (u, p, r, \omega)= \frac{1}{\omega^u \sqrt{\pi}} \sum_{k=0}^{\infty} \frac{1}{k!(rk + u)} \Gamma \left( \frac{1 + u + rk}{2} \right)\!\!\!\left( \frac{-p}{\omega^r} \right)^k\!\!\!.
\end{equation*}
Substituting \(x = y^2\)  and using~\eqref{temp5},
\begin{equation*}
I_2 = 2 \cdot \Phi\left(2a, b, 2n,  \sqrt{\frac{c}{2}}\right),\quad I_3 = 2 \cdot \Phi\left(2a, b, 2n, \sqrt{\frac{\delta}{2}}\right).
\end{equation*}
Finally, for $I_4$, we use the approximation~\cite{1210748},
\(
\text{erfc}(x)\approx \frac{1}{6} e^{-x^2} +\frac{1}{2} e^{-\frac{4x^2}{3}},
\)
to obtain
\begin{align}
I_{4} &\approx \sum_{j=1}^4D_j\int_0^\infty x^{a-1} \exp\left(-b x^{n} - C_j x\right) dx \approx \sum_{j=1}^4 \Psi(D_j,C_j), \label{temp6}
\end{align}
where $\mathbf{C}$, $\mathbf{D}$, and $\Psi(D_j,C_j)$  are given in the statement of Lemma~\ref{Theorem1}. Equation~\eqref{temp6} is due to~\cite[Eq.10]{11072420}~\cite{wolfram}. Combining $I_1$, $I_2$, $I_3$, and $I_4$ concludes the proof. 
\end{proof}
\section{Analysis of ZF \texorpdfstring{\( P_e \)}{Pe} in outdoor THz channels} 
\label{ZF_PC_MG}
Let \(z = \frac{n_{tr}}{h}  + \frac{n}{h} = \varphi + \omega\) be the ZF output of the channel given in~\eqref{Sp_Thz_sys}, where \(\varphi \sim \mathcal{CN}(0, 2\bar{\sigma}_{tr}^2)\) with \(2\bar{\sigma}_{tr}^2 \!=\! \kappa_t^2+\kappa_r^2\). 
The PDF of \(\omega\) in Cartesian coordinates is implied by~\eqref{temp3} 
\begin{align}
    & f_\omega(\omega_x, \omega_y) = \sum_{i=1}^{K} \frac{\alpha_i}{2\pi\sigma_n^2\nu^{\beta_i}}  \int_{0}^{\infty}y^{\beta_i+1}\exp\left(-\frac{(\omega_x^2+\omega_y^2) y^2}{2\sigma_n^2}-\frac{\zeta_i}{\nu} y\right) dy, \notag
\end{align}
which implies by convolution
\begin{align}
    f_{Z_y, Z_x}(z_y, z_x)
    & = \iint_{-\infty}^{\infty} f_\varphi(\varphi_y, \varphi_x)  
    f_\omega(z_x - \varphi_x, z_y - \varphi_y) \, d\varphi_y \, d\varphi_x\notag\\
    &= \iint_{-\infty}^{\infty} \frac{1}{2\pi \bar{\sigma}_{tr}^2} 
    \exp\left(-\frac{\varphi_x^2 + \varphi_y^2}{2\bar{\sigma}_{tr}^2}\right) 
    \frac{1}{2\pi\sigma_n^2} \sum_{i=1}^{K} \frac{\alpha_i}{\nu^{\beta_i}}  \int_{0}^{\infty}\! y^{\beta_i+1} 
    \exp\left(\!-\frac{(\|{\bf z} - \boldsymbol{ \varphi}\|^2 y^2}{2\sigma_n^2}  
    -\!\frac{\zeta_i}{\nu} y\!\right) dyd\varphi_yd\varphi_x\notag\\
    &= \sum_{i=1}^{K} \frac{\alpha_i}{4\pi^2 \bar{\sigma}_{tr}^2 \sigma_n^2 \nu^{\beta_i}} \int_{0}^{\infty} y^{\beta_i+1}  
    \exp\left(-\frac{\zeta_i}{\nu} y\right) 
    \iint_{-\infty}^{\infty} \exp\left(\!-\frac{\|\boldsymbol{\varphi}\|^2}{2\bar{\sigma}_{tr}^2} \!
    - \!\frac{\|{\bf z} - \boldsymbol{ \varphi}\|^2  y^2}{2\sigma_n^2}\!\right)d\varphi_y d\varphi_x dy,\label{eq:convolution_integral}
\end{align}
where $\|\boldsymbol{\varphi}\|^2 = \varphi_x^2 + \varphi_y^2$ and $\|{\bf z} - \boldsymbol{ \varphi}\|^2 = (z_x - \varphi_x)^2 + (z_y - \varphi_y)^2$. We first evaluate the inner integral
\begin{align}
    &I= \int\!\!\!\int_{-\infty}^{\infty}\!\exp\left(\!  - \frac{\|\boldsymbol{\varphi}\|^2}{2\bar{\sigma}_{tr}^2} -\frac{\|{\bf z} - \boldsymbol{\varphi}\|^2}{2\sigma_n^2} y^2\!\!\right)d\varphi_xd\varphi_y \notag \\
    &= \exp\left(-\frac{y^2}{2\sigma_n^2} (z_x^2 + z_y^2)\right) \int_{-\infty}^{\infty}\hspace{-2.5mm}\exp\!\left(\!\!-\!\left(\!\frac{y^2}{2\sigma_n^2}\!+\!\frac{1}{2\bar{\sigma}_{tr}^2}\!\right)\!\! \|\boldsymbol{\varphi}\|^2 \! + \! \frac{y^2}{\sigma_n^2}(z_x \varphi_x\!+\!z_y\varphi_y)\!\right)\!d\varphi_x d\varphi_y,\nonumber\\
    &= \frac{2\pi\bar{\sigma}_{tr}^2 \sigma_n^2}{(\bar{\sigma}_{tr}^2 + \frac{\sigma_n^2}{y^2})}\exp\left(-\frac{z_x^2 + z_y^2}{2(\bar{\sigma}_{tr}^2 + \frac{\sigma_n^2}{y^2})} \right),\nonumber
\end{align}
where we used~\cite[Eq 3.323.2]{zwillinger2014table} to write the last equality. Substituting the value of $I$ into~\eqref{eq:convolution_integral}, 
\begin{align*}
    f_{Z_x,Z_y}(z_x, z_y) &= \sum_{i=1}^{K} \frac{\alpha_i}{2\pi (\bar{\sigma}_{tr}^2 + \frac{\sigma_n^2}{y^2}) \nu^{\beta_i}} \int_{0}^{\infty} y^{\beta_i+1}\exp\left(-\frac{z_x^2 + z_y^2}{2(\bar{\sigma}_{tr}^2 + \frac{\sigma_n^2}{y^2})} - \frac{\zeta_i}{\nu} y\right) \, dy. 
\end{align*}
Finally, the average SER for the standard $4$-QAM symbol constellation can be computed by~\eqref{SER-ZF} as
\begin{align}
    &P_{e|s_1} = 1 - \sum_{i=1}^{K} \frac{\alpha_i}{2\pi (\bar{\sigma}_{tr}^2 + \frac{\sigma_n^2}{y^2}) \nu^{\beta_i}}  \iint_0^\infty \int_{0}^{\infty}y^{\beta_i+1}\exp\left(-\frac{(\tilde{r}_x - s_x)^2 + (\tilde{r}_y - s_y)^2}{2(\bar{\sigma}_{tr}^2 + \frac{\sigma_n^2}{y^2})} - \frac{\zeta_i}{\nu} y\right) \, dy \, d\tilde{r}_x \, d\tilde{r}_y \notag\\
    &= 1 -  \sum_{i=1}^{K} \frac{\alpha_i}{\nu^{\beta_i}} \int_{0}^{\infty} y^{\beta_i+1} \exp\left(-\frac{\zeta_i}{\nu} y\!\!\right) \left(\!1 - Q\left(\frac{s_y}{\sqrt{\bar{\sigma}_{tr}^2 + \frac{\sigma_n^2}{y^2}}}\!\right)\!\right)\!\left(\!1 - Q\left(\frac{s_x}{\sqrt{\bar{\sigma}_{tr}^2 + \frac{\sigma_n^2}{y^2}}}\!\right)\!\right)\!dy \notag.
\end{align}
For a 4-QAM constellation, \(P_{c|s_1}\) reduces to
\begin{align*}
P_{c|s_1}=\sum_{i=1}^{K}\!\frac{\alpha_i}{\nu^{\beta_i}} \int_{0}^{\infty}\hspace{-1.5mm}y^{\beta_i+1}\!\left(1\!-\!Q\!\left(\!\frac{s_x}{\sqrt{\bar{\sigma}_{tr}^2\!\!+\!\!\frac{\sigma_n^2}{y^2}}}\!\right)\!\right)^2\!\!\exp\left(\!\!-\frac{\zeta_i}{\nu}\!y\right)\!dy.
\end{align*}
\section{SER Calculation in Circular Additive Noise and Fading Channels}\label{appB}
In this appendix, we show that computing the SER of the fading channel by averaging the AWGN SER over the fading distribution is valid only when the channel and the additive noise are independent. Let $r = hs + n$, where $h$ and $n$ are respectively the channel multiplicative fading and the additive noise which are assumed to be circular and having a joint density $f_{h,n}(a,b)$. Define the ratio \( z  = \frac{n}{h}  = \frac{\rho_n}{\rho_h} e^{i(\theta_n - \theta_h)} = \rho e^{i\theta} \). We first note that because both $\theta_n$ and $\theta_h$ are uniform $\mathcal{U}(0,2\pi)$, $\theta \sim \mathcal{U}(0,2\pi)$. Hence, $\rho$ and $\theta$ are independent. The probability of correct detection, \( P_c \), given a symbol $s$ is transmitted is
\begin{align}
    P_c &= 1 - P_e = \iint_{D_{s}} f_Z(x, y) \, dx \, dy \notag\\
 &= \iint_{D_{s}} f_{\rho, \theta}(\rho, \theta) \cdot \rho \, d\rho \, d\theta \notag\\
    &= \frac{1}{2\pi} \iint_{D_{s}} f_{\rho}(\rho) \cdot \rho \, d\rho \, d\theta \label{eq:ind}\\
    &= \frac{1}{2\pi} \iint_{D_{s}} \rho \int_{-\infty}^{\infty} |a| f_{\rho_n,\rho_h}(\rho a, a) \, da \, d\rho \, d\theta, \label{eq:meth1}
\end{align}
where ${D_{s}}$ is the ML decision region corresponding to symbol $s$ and where we used the fact that \( \rho \) and \( \theta \) are independent in~(\ref{eq:ind}). Equation~(\ref{eq:meth1}) is justified because 
\begin{equation*}
    f_{\rho}(\rho) = \int_{-\infty}^{\infty} |a| f_{\rho_n, \rho_h}(\rho a, a) \, da.
\end{equation*}
When considering~(\ref{eq:meth1}) under the special case of \( h \) and \( n \) being independent, we can further simplify the expression to
\begin{align}
    P_c &= \frac{1}{2\pi} \iint_{D_{s}} \rho \int_{-\infty}^{\infty} |a| f_{\rho_h}(a) f_{\rho_n}(\rho a) \, da \, d\rho \, d\theta \notag\\
    &= \frac{1}{2\pi} \int_{-\infty}^{\infty} |a| f_{\rho_h}(a) \iint_{D_{s}}  f_{\rho_n}(\rho a) \rho \, d\rho \, d\theta \, da \notag\\
    &= \int_{-\infty}^{\infty} f_{\rho_h}(a) \frac{1}{2\pi} \iint_{D_{s}} |a|  f_{\rho_n}(\rho a) \rho \, d\rho \, d\theta \, da \notag\\
    &=  \int_{-\infty}^{\infty} f_{\rho_h}(a) \frac{1}{2\pi} \iint_{D_{s}} f_{\frac{\rho_n}{a}}(\rho) \rho\, d\rho \, d\theta \, da \notag\\
    &= \int_{-\infty}^{\infty} f_{\rho_h}(a) \iint_{D_{s}} f_{Z||h|=a}(x,y) \, dx \, dy \, da \notag\\
    &= \int_{-\infty}^{\infty} f_{\rho_h}(a) P_c(a) \, da, \label{eq:meth2}
\end{align}
where $P_c(a) =  1 - P_e(a) = \iint_{D_{s}} f_{(Z |\abs{h}=a)}(x,y) \, dx \, dy$ is the probability of correct detection given $|h| = a$.
This shows that the use of equation~\eqref{eq:meth2} to compute the SER is only valid whenever $h$ and $n$ are independent. 
\newpage

\bibliographystyle{IEEEtran}
\bibliography{IEEEabrv,reference.bib}
\end{document}